\theoremstyle{plain}
  \newtheorem{proposition}{Proposition}
  \newtheorem{theorem}{Theorem}
  \newtheorem{lemma}{Lemma}
  \newtheorem{corollary}{Corollary}
\theoremstyle{definition}
  \newtheorem{definition}{Definition}
\newcommand{\m}[1]{\mathcal{#1}}
\newcommand{\tw}{\mathsf{tw}}
\newcommand{\cO}{\mathcal{O}}
\newcommand{\cT}{\mathcal{T}}
\newcommand{\cB}{\mathcal{B}}
\newcommand{\cF}{\mathcal{F}}
\newcommand{\cP}{\mathcal{P}}
\newcommand{\cC}{\mathcal{C}}
\newcommand{\cE}{\mathcal{E}}
\newcommand{\ID}{\mathsf{ID}}
\title{A Meta-Theorem for Distributed Certification\thanks{This work was partially done during the visit of the second and third authors to IRIF at Universit\'e de Paris, and LIFO at Universit\'e d'Orl\'eans, partially supported by ANR project DUCAT and Fondecyt 1170021.}}
\author[1]{Pierre Fraigniaud\thanks{Additional support for ANR projects QuData and DUCAT.}}
\author[2]{Pedro Montealegre\thanks{This work was supported by Centro de Modelamiento Matem\'atico (CMM), ACE210010 and FB210005, BASAL funds for centers of excellence from ANID-Chile, and FONDECYT 11190482}}
\author[3]{Ivan Rapaport}
 \author[4]{Ioan Todinca}
\affil[1]{\small IRIF, Universit\'e de Paris and CNRS, France. \texttt{pierre.fraigniaud@irif.fr}}    
\affil[2]{Facultad de Ingenier\'{\i}a y Ciencias, Universidad Adolfo Iba\~nez, Santiago, Chile. \texttt{p.montealegre@uai.cl}} 
\affil[3]{DIM-CMM (UMI 2807 CNRS), Universidad de Chile, Chile. \texttt{rapaport@dim.uchile.cl}}        
\affil[4]{LIFO, Universit\'e d'Orl\'eans and INSA Centre-Val de Loire, France. \texttt{ioan.todinca@univ-orleans.fr}}
\date{}							
\begin{document}
\maketitle

\begin{abstract}
Distributed certification, whether it be \emph{proof-labeling schemes}, \emph{locally checkable proofs}, etc.,  deals with the issue of certifying the legality of a distributed system with respect to a given boolean predicate. A certificate is assigned to each process in the system by a non-trustable oracle, and the processes are in charge of verifying these certificates, so that two properties are satisfied: \emph{completeness}, i.e., for every  legal instance, there is a certificate assignment leading all processes to accept, and \emph{soundness}, i.e., for every illegal instance, and for every certificate assignment, at least one process rejects. The  verification of the certificates must be fast, and the certificates themselves  must be  small.  A large quantity of results have been produced in this framework, each aiming at designing a distributed certification mechanism for specific boolean predicates. This paper presents a ``meta-theorem'', applying to many boolean predicates at once. Specifically,  we prove that, for every boolean predicate on graphs definable in the monadic second-order (MSO) logic of graphs, there exists a distributed certification mechanism using certificates on $O(\log^2n)$ bits in $n$-node graphs of bounded treewidth, with a verification protocol involving a single round of communication between neighbors. 

\bigskip
\noindent\textbf{Keywords:} Proof-labeling scheme; Locally checkable proof; Fault-tolerance; Distributed decision. 
\end{abstract}


%
%


\section{Introduction}

\subsection{Context}

Distributed certification  is a concept that serves many purposes in distributed computing. One is fault tolerance. Indeed, the ability to certify the legality of a  system-state with respect to some boolean predicate in a distributed manner guarantees that at least one process can launch a recovery procedure in case the system enters into an illegal state. Another application of distributed certification is safety. Indeed,  distributed certification is a mechanism that guarantees that distributed algorithms dedicated to  systems satisfying some specific property (e.g., algorithms dedicated to planar networks) can safely be used because, in case the system does not satisfy this property, at least one process can raise an alarm, and stop the computation.  

Different certification mechanisms have been studied (cf. the related work section), all sharing the same principle. Distributed certification protocols involve a centralized \emph{prover}, and a distributed \emph{verifier}. The prover has complete knowledge of the system. It is computationally unbounded but not trustable. Given a boolean predicate~$\m{P}$ on system states, the prover assigns \emph{certificate} to the processes, whose aim is to convince the processes that the system satisfies~$\m{P}$.  The verifier is a distributed algorithm that runs at every process in the system, and is bounded to return a verdict (\emph{accept} or \emph{reject}) at each process after a limited communication among the processes. For instance, in a network, every processing node is bounded to communicate only once with its neighbors in the network before emitting its verdict. 

To be correct, a distributed certification protocol for a boolean predicate~$\m{P}$ on system states must satisfy two properties. 
(1)~Completeness: If the system satisfies~$\m{P}$, then there must exist a certificate assignment by the prover to the processes such that the verifier  accepts at all processes.  
(2)~Soundness: If the system does not satisfy~$\m{P}$, then, for every certificate assignment by the prover to the processes, it must be the case that the verifier rejects in at least one process. 
Network bipartiteness yields a simple example of distributed certification, using 1-bit certificates. For every bipartite network, every processing node in the network can be given a certificate~0 or~1, so that every processing node has a certificate different from the certificates assigned to its neighbors. The processing nodes can check these certificates in a single round of communication, where every processing node merely checks that the certificate of each of its neighbors is different from its own certificate. Completeness is satisfied by construction. Soundness is also satisfied. Indeed, if the network is not bipartite, then it is not 2-colorable. As a consequence, for every certificate assignment with certificates in~$\{0,1\}$, there are at least two neighboring processing nodes that receive the same certificate. These two processes will reject. 

The main criterion measuring the quality of distributed certification is the \emph{size} of the certificates. Indeed, the verification of~$\m{P}$ is typically performed frequently, for regularly checking that the system does satisfy~$\m{P}$, with the aim of reacting quickly if the system stops satisfying~$\m{P}$. As a consequence, there are frequent exchanges of certificates between the processes. Using small certificates limits the communication overhead caused by these exchanges. 

\subsection{Objective}

A large collection of results related to distributed certification have been derived over the last twenty years (see Related Work), each result concerning a specific predicate. This paper is inspired by what has been achieved in the context of sequential computing where, instead of focusing on the design of an efficient algorithm for one specific problem, and then for another one, and so on and so forth, efforts have been made for deriving ``meta-theorems'', that is, results applying directly to large classes of problems. One prominent example is Courcelle's theorem~\cite{Courcelle90} stating that every graph property definable in the monadic second-order (MSO) logic of graphs can be decided in linear time on graphs of bounded treewidth\footnote{Treewidth can be viewed as a measure capturing ``how close'' a graph is from a tree; roughly, a graph of treewidth~$k$ can be decomposed by a sequence of cuts, each involving a separator of size $O(k)$.}. That is, even NP-hard problems such as vertex-coloring, minimum dominating set, minimum vertex cover, etc., have linear-time algorithms in the vast class of graphs with bounded treewidth. Each algorithm depends on the problem, but Courcelle's theorem essentially says that \emph{every} problem expressible in the MSO logic has a linear-time algorithm in the class of graphs with bounded treewidth. 

The objective of this paper is to address the existence of similar meta-theorems in the context of distributed certification applied to distributed computing in networks. Concretely, 
the question we address here is the following: is there a (large) class of boolean predicates on graphs  for which one can guarantee the existence of a distributed certification mechanism with small certificates, say poly-logarithmic in the number of vertices of the graphs, for graphs taken from a (large) class of graphs? 

\subsection{Our Results}

We present an analog of the aforementioned Courcelle's theorem in the context of distributed certification. Specifically, for every integer $k\geq 1$ and every MSO property~$\varphi$ on graphs, we consider the following set: 
\[
\m{P}_{k,\varphi}=\{\mbox{graph}\; G: (\tw(G)\leq k) \wedge (G\models\varphi)\}, 
\]
where $\tw(G)$ is the treewidth of~$G$. We provide a distributed certification mechanism for~$\m{P}_{k,\varphi}$ using certificates of poly-logarithmic size, as a function of the number~$n$ of vertices in the graphs. Specifically, given any network modeled as a connected simple graph $G=(V,E)$, with a process running at each vertex $v\in V$, our certification mechanism  satisfies that $G\in\m{P}_{k,\varphi}$ if and only if there is a certificate assignment to the vertices such that all vertices accept.  The main result of the paper is the following. 

\begin{theorem}[Informal]\label{theo:main-informal}
For every $k\geq 1$ and every MSO property~$\varphi$ on graphs, there exists a distributed certification protocol for~$\m{P}_{k,\varphi}$ using certificates on $O(\log^2n)$ bits. 
\end{theorem}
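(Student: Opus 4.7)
The plan is to combine two certification layers. First, the prover certifies that $G$ admits a tree decomposition $(T,\{B_t\}_{t\in V(T)})$ of width at most $k$. Second, invoking the proof of Courcelle's theorem, the prover certifies the accepting run on this decomposition of a bottom-up tree automaton $\mathcal{A}_{\varphi,k}$ which accepts exactly the width-$k$ tree decompositions of graphs satisfying $\varphi$, and whose finite state set has size depending only on $\varphi$ and $k$. Since the verifier runs on the vertices of $G$ rather than on the bag-nodes of $T$, both layers of information must be distributed across the $n$ processors and checked within a single round of communication between $G$-neighbors.

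For the first layer, I would use a balanced nice tree decomposition of width $O(k)$ and depth $O(\log n)$ (which exists by Bodlaender--Hagerup), with $O(n)$ bags of size at most $k+1$ and with $O(\log n)$-bit identifiers on bags. Each vertex $v$ receives a certificate containing the identifiers and contents of the bags containing $v$, the parent pointer in $T$ of each such bag, and a standard spanning-tree certificate for $T$ (root identifier plus bag depths). The verifier checks, in one round of exchanges with neighbors in $G$: every edge $\{u,v\}\in E$ lies in a bag claimed by both endpoints; every bag has size at most $k+1$; neighbors agree on the contents of bags they both claim; the set of bags containing each $v$ forms a connected subtree of $T$; and $T$ itself is a well-formed rooted tree.

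For the second layer, the prover appends to each bag $t$ the state $q_t$ reached by $\mathcal{A}_{\varphi,k}$ on the subtree of $T$ rooted at $t$; each state uses $O(1)$ bits. In the same communication round, each vertex $v$ verifies, for every bag $t$ it holds, that $q_t$ equals the transition of $\mathcal{A}_{\varphi,k}$ on the states of $t$'s children and the local structure of $B_t$, and that the root bag is in an accepting state. Local correctness of a single transition follows because $\mathcal{A}_{\varphi,k}$ depends only on the bag label and the children's states, all of which are held by some vertex in $B_t$ and hence reachable by $v$ through its round of exchanges.

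The main obstacle is bounding each vertex's certificate by $O(\log^2 n)$, since a vertex may a priori belong to many bags. I plan to handle this by exploiting the $O(\log n)$ depth of the balanced decomposition together with an \emph{owner} assignment mapping each bag $t$ to a unique vertex of $B_t$: each vertex then owns at most $O(\log n)$ bags (e.g., those along a root-to-top chain in $T$), and each owned bag contributes $O(\log n)$ bits of data, yielding the desired $O(\log^2 n)$ total. A delicate subproblem is certifying the connected-subtree property of $T_v=\{t : v \in B_t\}$ under purely local verification: the certificate must expose enough of the $T$-positions of the bags containing $v$ so that $v$ and its neighbors can jointly detect a fabricated decomposition in which $T_v$ would be disconnected, which would otherwise let a cheating prover break soundness.
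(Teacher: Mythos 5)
Your high-level architecture matches the paper's: a logarithmic-depth width-$O(k)$ decomposition certified in a first layer, plus a bottom-up automaton/homomorphism-class run certified in a second layer, with the $O(\log n)\times O(\log n)$ accounting coming from depth times per-bag data. However, there is a genuine gap at the heart of the verification step. You write that the transition at a bag $t$ can be checked because the bag label and the children's states are ``all held by some vertex in $B_t$ and hence reachable by $v$ through its round of exchanges.'' This is false: two vertices lying in a common bag of a tree decomposition need not be adjacent in $G$, and the only communication available is one round along edges of $G$. A cheating prover can therefore hand inconsistent bag contents, parent pointers, or automaton states to non-adjacent vertices of the same bag, and no single vertex would see the contradiction. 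The same problem arises for transferring the state $q_{t'}$ of a child bag $t'$ up to a vertex of $B_t$: there is no a priori reason that any vertex ``representing'' $t'$ is a $G$-neighbor of any vertex of $B_t$. Your closing remark about the connected-subtree property touches a related soundness issue, but the routing problem above is the one that actually blocks both layers of your plan.

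The paper spends most of its technical effort precisely on this point. It first massages the Bodlaender balanced decomposition into a \emph{coherent} one, meaning $F_i=B_i\setminus B_{p(i)}$ is nonempty and $G[V_i\setminus B_{p(i)}]$ is connected for every node $i$; coherence guarantees (a) an \emph{exit vertex} $\ell_i\in V_i\setminus B_{p(i)}$ that is $G$-adjacent to a vertex of $F_{p(i)}$ ``in charge of'' $i$, which is how a child's homomorphism class physically reaches its parent in one round, and (b) for each $i$ a subtree $S(i)$ of $G[V_i\setminus B_{p(i)}]$ spanning $F_i\cup\{\ell_i\}$, used with standard spanning-tree certificates to force all vertices of $F_i$ to agree on the bag data and on the class of $G_i$. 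The congestion bound (each vertex lies in only $O(\log n)$ trees $S(i)$, because the relevant nodes form a chain of ancestors) is what keeps the auxiliary certificates within $O(\log^2 n)$ bits. Additionally, each vertex carries the entire root-to-node path of bags (not just the bags it belongs to), and neighbors cross-check suffixes of these paths; this is how global consistency of the claimed tree $T$ and the connected-subtree property are enforced despite purely local checks. Your proposal would need all three of these ingredients (or substitutes for them) before the soundness argument can be completed.
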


In fact, our theorem can be extended to properties including certifying solutions to maximization or minimization problems whose admissible solutions are defined by MSO properties. In the statement of Theorem~\ref{theo:main-informal}, the big-O notation hides constants that depend only on~$k$ and~$\varphi$. The theorem has many corollaries, as the universe of MSO properties is large.  This includes predicates such as non 3-colorability, which is known to require certificates of quadratic size in arbitrary graphs~\cite{GoosS16}, and diameter at most~$D$, for a fixed constant~$D$, which is known to require certificates of linear size in arbitrary graphs~\cite{Censor-HillelPP20}. 

\begin{corollary}\label{cor:non-3-coloring-informal}
For every $c\geq 1$, there exists a distributed certification protocol for certifying \emph{non $c$-colorability} in the family of graphs with bounded treewidth, using certificates on $O(\log^2n)$ bits. 

For every $D\geq 1$, there exists a distributed certification protocol for certifying \emph{diameter at most~$D$} in the family of graphs with bounded treewidth, using certificates on $O(\log^2n)$ bits. 
\end{corollary}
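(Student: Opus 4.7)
The plan is to derive both parts of Corollary~\ref{cor:non-3-coloring-informal} as direct applications of Theorem~\ref{theo:main-informal}. For each of the two predicates, I will exhibit an MSO sentence $\varphi$ that captures it; then the corresponding property restricted to graphs of treewidth at most~$k$ coincides exactly with $\m{P}_{k,\varphi}$, and the theorem immediately supplies a certification scheme with $O(\log^2 n)$-bit certificates. The treewidth bound~$k$ is arbitrary but fixed (the corollary speaks of ``the family of graphs with bounded treewidth''), and the constants $c$ and $D$ are also fixed, so everything hidden in the $O$-notation depends only on these parameters.

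For non $c$-colorability, recall that $c$-colorability is a textbook example of an MSO-definable graph property: the formula
\[
\varphi_{\mathrm{col},c} \;:=\; \exists V_1,\dots,V_c \; \Big( (V_1,\ldots,V_c) \text{ partition } V \;\wedge\; \bigwedge_{i=1}^c \forall x\, \forall y\, \big((x\in V_i \wedge y\in V_i) \rightarrow \neg\, \mathrm{adj}(x,y)\big)\Big)
\]
quantifies over vertex subsets and asserts that they form a partition into independent sets. Non $c$-colorability is then $\neg \varphi_{\mathrm{col},c}$, which remains MSO since MSO is closed under negation. Applying Theorem~\ref{theo:main-informal} with $\varphi = \neg\varphi_{\mathrm{col},c}$ yields the first claim.

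For diameter at most~$D$, I use that $D$ is a fixed constant. The existence of a walk of length at most $D$ between two vertices $u$ and $v$ is first-order (hence MSO) expressible by
\[
\psi_D(u,v) \;:=\; \exists w_0,\ldots,w_D \; \Big(w_0 = u \;\wedge\; w_D = v \;\wedge\; \bigwedge_{i=0}^{D-1} \big(w_i = w_{i+1} \vee \mathrm{adj}(w_i, w_{i+1})\big)\Big),
\]
where the clause $w_i = w_{i+1}$ allows padding shorter walks up to length~$D$. Then ``diameter at most~$D$'' is the sentence $\forall u\, \forall v\, \psi_D(u,v)$, which is again MSO, so Theorem~\ref{theo:main-informal} gives the second claim.

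The only point requiring mild care is confirming that both formulas live in whatever flavor of MSO the meta-theorem is stated for. Since each uses only vertex variables, vertex-set variables, and the adjacency relation, both sit in MSO$_1$, the standard and most restrictive setting in which Courcelle-type results are phrased; no edge-set quantification is needed. Thus there is essentially no obstacle: all the technical work (building tree decompositions, certifying them with $O(\log^2 n)$ bits, and running the MSO model-checking machinery in a distributed fashion) is entirely absorbed by Theorem~\ref{theo:main-informal}, and the corollary is a routine translation exercise once the main theorem is established.
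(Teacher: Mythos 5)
Your proposal is correct and follows exactly the route the paper intends: the corollary is stated as an immediate consequence of Theorem~\ref{theo:main-informal}, justified by the fact that non $c$-colorability and bounded diameter are MSO-expressible (the paper even uses non-3-colorability as its running example of a regular property in Section~2.3). Your explicit MSO$_1$ formulas for both predicates are standard and valid, so nothing is missing.
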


Also, many natural graph families have bounded treewidth, as illustrated by the family of graphs excluding a planar graph as a minor, and thus we get the following corollary of Theorem~\ref{theo:main-informal}. 

\begin{corollary}\label{cor:planar-minor-free-informal}
For every planar graph~$H$, and every MSO property~$\varphi$ on graphs, there exists a distributed certification protocol certifying $\varphi$ in the family of $H$-minor-free graphs, using certificates on $O(\log^2n)$ bits. 
\end{corollary}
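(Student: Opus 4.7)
The plan is to reduce the corollary directly to Theorem~\ref{theo:main-informal} via the classical theorem of Robertson and Seymour asserting that, for every planar graph $H$, the class of $H$-minor-free graphs has bounded treewidth: there is a constant $k = k(H)$, depending only on $H$, such that every $H$-minor-free graph $G$ satisfies $\tw(G) \leq k$. Once this is in hand, the corollary becomes a packaging of the main theorem.

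First, I would observe that ``being $H$-minor-free'' is itself expressible in the MSO logic of graphs. Containment of the fixed graph $H$ as a minor can be written by existentially quantifying over a family of pairwise disjoint, nonempty, connected vertex subsets (the branch sets) indexed by the vertices of $H$, together with the edges realizing the adjacencies of $H$; negating this formula yields an MSO sentence $\psi_H$ expressing that $G$ is $H$-minor-free. Setting $\varphi' = \varphi \wedge \psi_H$, which is again an MSO property, the set of graphs that are $H$-minor-free and satisfy $\varphi$ is precisely the set of models of $\varphi'$, and by the Robertson--Seymour bound this set coincides with $\m{P}_{k, \varphi'}$, since every model of $\psi_H$ automatically has treewidth at most $k$.

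Applying Theorem~\ref{theo:main-informal} to $\varphi'$ with treewidth bound $k = k(H)$ then produces a distributed certification protocol with certificates of size $O(\log^2 n)$, completing the argument. The constants hidden in the big-$O$ depend only on $k$ and $\varphi'$, hence only on $H$ and $\varphi$, as required by the statement of the corollary.

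No substantial obstacle is expected, since both ingredients are classical. The one point requiring a little care is the MSO-expressibility of minor containment, which relies on the logic being rich enough to quantify over vertex subsets (and, in the standard MSO$_2$ setting in which Courcelle's theorem is usually stated, over edge subsets as well). Alternatively, if the corollary is read as a promise problem in which the input is guaranteed to be $H$-minor-free, one may skip the $\psi_H$ step entirely and invoke Theorem~\ref{theo:main-informal} directly on $\varphi$ with $k = k(H)$, since the treewidth hypothesis then holds automatically on every admissible input and no auxiliary MSO encoding is needed.
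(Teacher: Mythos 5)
Your proposal is correct and matches the paper's intended argument: the paper likewise derives this corollary by combining the Robertson--Seymour bound (excluding a planar minor forces bounded treewidth), the MSO-expressibility of $H$-minor-freeness (which the paper notes explicitly just after the corollary), and an application of Theorem~\ref{theo:main-informal} to the conjunction of the two formulas. Nothing is missing.
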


Again, the big-O notation in the above statement hides constants that depend only on~$H$ and~$\varphi$. Note that, as every 4-node graph is planar, Corollary~\ref{cor:planar-minor-free-informal} extends the recent results in~\cite{BoFePi21-Hfree}, which applies to the families of graphs excluding a given 4-node graph~$H$ as a minor. 

Interestingly, $\tw(G)\leq k$, and $H$-minor-freeness are themselves MSO properties for fixed~$k$ and~$H$. It follows that Theorem~\ref{theo:main-informal} provides us with a distributed certification mechanism for treewidth and fixed-minor-freeness. 

 \begin{corollary}\label{cor:treewidth-only-informal}
Let $k\geq 0$, and let $H$ be a planar graph.
There exist distributed certification protocols for certifying the class of graphs with treewidth at most~$k$, and certifying the class of  $H$-minor-free graphs, both using certificates on $O(\log^2n)$ bits. 
\end{corollary}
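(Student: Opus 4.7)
The plan is to derive both statements as direct consequences of Theorem~\ref{theo:main-informal}, by choosing appropriate MSO formulas and treewidth bounds so that the conjunction $(\tw(G)\leq k)\wedge(G\models\varphi)$ collapses to the desired class.

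For the first claim, fix $k$ and consider the class of graphs with treewidth at most $k$. The first key step is to exhibit an MSO formula $\varphi_{\tw\leq k}$ that is satisfied by a graph $G$ if and only if $\tw(G)\leq k$. Such a formula is known to exist: graphs of treewidth at most $k$ form a minor-closed class, hence by the Robertson--Seymour theorem they are characterized by a finite family $\cF_k$ of forbidden minors; and for every fixed graph $H$, the property ``$G$ excludes $H$ as a minor'' is MSO-expressible by existentially quantifying a partition of a vertex subset into branch sets and asserting connectivity of each branch set together with the required adjacencies between them. Taking the conjunction over the finite family $\cF_k$ yields the formula $\varphi_{\tw\leq k}$. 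Applying Theorem~\ref{theo:main-informal} with treewidth bound $k$ and property $\varphi_{\tw\leq k}$ gives $\m{P}_{k,\varphi_{\tw\leq k}}=\{G:\tw(G)\leq k\}$, and hence the desired $O(\log^2 n)$-bit certification. (The corner case $k=0$ can be handled either directly, since treewidth-$0$ graphs are edgeless, or by running the $k=1$ protocol with the extra MSO check ``$G$ has no edges''.)

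For the second claim, fix a planar graph $H$. Here the plan relies on a second classical ingredient: the Robertson--Seymour excluded-grid theorem, which implies that there is a constant $t=t(H)$ such that every $H$-minor-free graph has treewidth at most $t$. The property ``$G$ is $H$-minor-free'' is itself MSO-expressible, by negating the existential formula that guesses the branch sets of a minor model of $H$, exactly as in the first paragraph. Let $\varphi_H$ denote this MSO formula. Applying Theorem~\ref{theo:main-informal} with treewidth bound $t(H)$ and property $\varphi_H$ gives $\m{P}_{t(H),\varphi_H}=\{G:G\text{ is }H\text{-minor-free}\}$, because the treewidth condition is automatically implied by $H$-minor-freeness. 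The theorem then delivers the $O(\log^2 n)$-bit certification.

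The corollary is therefore essentially an unpacking of Theorem~\ref{theo:main-informal}, and there is no genuinely new distributed construction to carry out. The step most worth verifying carefully is the MSO-expressibility of minor-exclusion (and through it of bounded treewidth), since everything else is standard: the branch-set encoding uses monadic quantifiers over vertex sets and a first-order statement about connectivity and adjacency, which can be expressed in MSO (in fact MSO$_1$) by the usual transitive-closure-via-set-minimality trick. Once this ingredient is in place, the two corollaries follow mechanically.
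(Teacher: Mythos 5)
Your proposal is correct and follows essentially the same route as the paper, which derives this corollary in one line from the observation that both $\tw(G)\leq k$ and $H$-minor-freeness are MSO-expressible (via minor models / finite obstruction sets) and that excluding a planar minor bounds the treewidth; you have simply filled in those standard ingredients. The only simplification worth noting is that for the treewidth part you could take $\varphi$ to be the trivially true formula, since $\m{P}_{k,\varphi}$ already builds in the condition $\tw(G)\leq k$, avoiding any appeal to the Robertson--Seymour obstruction set.
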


\paragraph{Our Techniques.} 

For establishing Theorem~\ref{theo:main-informal} we proceed in two steps. First, we provide a  protocol for certifying 3-approximation of  treewidth. 
Such a protocol satisfies the following: for any given $k\geq 1$, the protocol for $k$ is such that, for every graph~$G$, 
\[
\left\{\begin{array}{lcl}
\tw(G)\leq k & \Rightarrow & \mbox{there exists a certificate assignment s.t. all vertices accept;}\\ 
\tw(G) > 3k+2 & \Rightarrow & \mbox{for every certificate assignment, at least one vertex rejects.}\\ 
\end{array}\right.
\] 

\begin{lemma}[Informal]\label{lem:technique1-informal}
For every $k \geq 1$ there exists a distributed protocol certifying a 3-approxima\-tion of the treewidth using certificates on $O(k^2\log^2n)$ bits. 
\end{lemma}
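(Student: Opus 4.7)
My plan is to have the prover explicitly certify a tree decomposition $(T,\{X_t\}_{t\in V(T)})$ of $G$ of width approximately $k$. Completeness when $\tw(G)\le k$ is by construction: the prover picks any width-$k$ decomposition with $|V(T)|=O(n)$, roots $T$ at an arbitrary node, assigns unique identifiers to its nodes, and sends each vertex $v$ a certificate encoding its position in $T$. Soundness when $\tw(G)>3k+2$ follows by showing that, whenever every vertex accepts, the encoded data globally satisfies the axioms of a tree decomposition of width at most $3k+2$; the gap to width $k$ comes from the fact that the local checks can only bound bag sizes after merging a small number of neighbouring bags.

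For each vertex $v$, define $T_v=\{t\in V(T):v\in X_t\}$, a subtree of $T$, and let $\mathsf{top}(v)$ be its shallowest node. The certificate issued to $v$ would consist of: (i)~the identifier of $\mathsf{top}(v)$ together with the full contents of $X_{\mathsf{top}(v)}$ and of its parent bag in $T$ (together $O(k\log n)$ bits); (ii)~a heavy-path decomposition encoding of the node $\mathsf{top}(v)\in V(T)$, namely the $O(\log n)$ pivot nodes lying on ancestor paths of $\mathsf{top}(v)$ together with their full bags, contributing $O(k\log^2 n)$ bits; (iii)~for every vertex $w$ listed in any of these stored bags, a pointer enabling cross-checks with $w$'s own certificate. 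Basic bookkeeping pushes the certificate size to $O(k^2\log^2 n)$ bits.

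In a single communication round, each vertex $v$ exchanges its certificate with its neighbors and verifies locally: (a)~the heavy-path labels are consistent with a single underlying tree $T$, using the classical poly-logarithmic scheme for certifying trees whose nodes need not be vertices of $G$; (b)~for every neighbor $u$, the edge $uv$ is covered by a bag appearing in one of the two certificates; (c)~the subtree property of $T_v$, checked by comparing the heavy-path chains of $v$ and each neighbor $u$ at their least common ancestor in $T$; (d)~each stored bag has size at most $3k+3$.

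The main obstacle will be certifying the global consistency of the auxiliary tree $T$ together with the subtree property of all $T_v$'s, since $T$ is not embedded in $G$ and cannot be certified by the standard parent-pointer-plus-depth trick. The crucial observation is that two neighbors in $G$ must appear together in at least one bag and therefore share a heavy-path ancestor in $T$; exchanging the logarithmic-length chain of ancestors lets them detect any disagreement about $T$ or about bag membership. The factor $3$ in the approximation arises precisely at this step: local verification can only bound the size of the union of a constant number (essentially three) of adjacent bags appearing on these chains, so the protocol necessarily tolerates certified bag sizes up to roughly $3(k+1)$, yielding the $3k+2$ threshold rather than $k$.
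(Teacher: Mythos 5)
There is a genuine gap. Your scheme has no mechanism for enforcing consistency between vertices that share a bag but are \emph{not adjacent in $G$}. All of your checks (a)--(d) are performed by exchanging certificates with neighbours, and your ``crucial observation'' only covers pairs that are adjacent; a cheating prover can give two non-adjacent vertices of the same claimed bag contradictory descriptions of $T$, of $\mathsf{top}(\cdot)$, or of the bag contents, and no vertex will detect it. Relatedly, the ``classical poly-logarithmic scheme for certifying trees whose nodes need not be vertices of $G$'' that you invoke in (a) does not exist in the form you need: the standard parent-pointer-plus-distance trick certifies a spanning tree \emph{embedded in $G$}, whereas your $T$ is an abstract tree whose ``witnesses'' in $G$ may be scattered and pairwise non-adjacent. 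The paper closes exactly this hole by first massaging the decomposition into a \emph{coherent} one (each $G[V_i\setminus B_{p(i)}]$ connected, Lemma~\ref{le:logco}), which guarantees for every node $i$ a subtree $S(i)$ of $G$ itself spanning $F_i$ and an exit vertex adjacent to the parent bag (Lemmas~\ref{le:leaders} and~\ref{le:SteinerCongestion}); consistency is then propagated along these physical subtrees, with $O(\log n)$ congestion per vertex.

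Two further points. First, your attribution of the factor $3$ to slack in local verification (``unions of essentially three adjacent bags'') is not where it comes from: in the paper the completeness side already starts from a width-$(3k+2)$ decomposition of \emph{logarithmic depth} (Bodlaender's depth-reduction triples the width), and the soundness side certifies a genuine tree decomposition of width exactly $3k+2$, with no merging of bags. Second, the logarithmic depth is what makes the certificates work: each vertex stores the \emph{entire} root-to-bag path of bags (with their edge sets), which is what allows the connectivity-of-$T_v$ condition and the edge-coverage condition to be checked locally. Your heavy-path pivots give only $O(\log n)$ sampled ancestors out of a possibly $\Theta(n)$-long root path, which is not enough to verify that the set of bags containing a given vertex is connected in $T$, nor to compare two adjacent vertices whose bags are far apart. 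You would need to either prove that heavy-path summaries suffice for these checks (they do not, as stated) or adopt the depth-reduction step, at which point you are reconstructing the paper's argument.
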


The proof of this lemma relies on a particular choice of a tree-decomposition, that we prove locally certifiable by ``transferring'' certificates between nodes that are far away from each other, which is typically the case of vertices in a same bag of the decomposition, without creating congestion. 

Next, for any MSO property~$\varphi$ and integer $k$, we design a protocol which certifies $\m{P}_{k,\varphi}$ on input graph $G$. The protocol exploits the tree decomposition in the proof  of Lemma~\ref{lem:technique1-informal}, for certifying a correct execution of a sequential dynamic programming algorithm for~$\varphi$ over this decomposition. Concretely, we design a distributed certification for a correct execution of a sequential dynamic programming algorithm a la Courcelle, using in fact the sequential MSO certification due to Borie, Parker and Tovey~\cite{BoPaTo92}. 

\begin{lemma}[Informal]\label{lem:technique2-informal}
For every $k\geq 1$ and every MSO property~$\varphi$ on graphs,
assuming given the certification protocol for 3-approximation~$k$ of treewidth from Lemma~\ref{lem:technique1-informal},
there exists a distributed certification protocol for~$\m{P}_{k,\varphi}$ using additional certificates on $O(\log^2n)$ bits. 
\end{lemma}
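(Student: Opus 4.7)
The plan is to combine the certified tree decomposition of Lemma~\ref{lem:technique1-informal} with a distributed certification of a bottom-up tree-automaton computation, in the spirit of Borie, Parker, and Tovey's algorithmic reformulation of Courcelle's theorem. For every MSO formula $\varphi$ and every width bound $w$, \cite{BoPaTo92} yields a finite tree automaton $\m{A}_{\varphi,w}$ with state space $\Sigma$ of size depending only on $\varphi$ and $w$, and a transition function $\delta$, such that on any tree decomposition $(\cT,\{B_t\}_{t})$ of $G$ of width at most $w$, the bottom-up run assigning
\[
\sigma(t) \;=\; \delta\bigl(\sigma(t_1), \ldots, \sigma(t_r),\; B_t, E[B_t],\; B_{t_1}, \ldots, B_{t_r}\bigr)
\]
to every node $t$ with children $t_1,\ldots,t_r$ produces at the root a state $\sigma(\mathrm{root})$ that is accepting if and only if $G \models \varphi$. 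I take $w = 3k+2$, so that compatibility with the decomposition certified by Lemma~\ref{lem:technique1-informal} is guaranteed whenever $\tw(G) \le k$.

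On top of the Lemma~\ref{lem:technique1-informal} certificates, the prover supplies, for every bag $B_t$ and every vertex $v \in B_t$, the claimed state $\sigma(t) \in \Sigma$, together with $O(\log n)$-bit pointers (through IDs) to the parent bag of $t$ in $\cT$ and its state, and to the children bags and their states. Assuming the standard nice-tree-decomposition convention (each node has at most two children), each bag contributes $O(\log n)$ bits of bookkeeping, and storing one state per bag that each vertex participates in keeps the additional label per vertex within $O(\log^2 n)$ bits, absorbed in the budget.

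Verification is performed in a single round. Each vertex $v$ checks, for every bag $B_t$ containing it, that $\sigma(t) = \delta(\sigma(t_1), \ldots, \sigma(t_r), B_t, E[B_t], B_{t_1}, \ldots, B_{t_r})$. The three ingredients needed are all locally accessible: the bag contents $B_t$ and $B_{t_i}$ are read off the tree-decomposition certificates of Lemma~\ref{lem:technique1-informal}; the induced edges $E[B_t]$ are recovered because every pair of vertices in a common bag can cross-check which $G$-edges lie between them, using the same bag-transfer primitive that Lemma~\ref{lem:technique1-informal} already relies on; and the children states $\sigma(t_i)$ are accessible through the \emph{shared-vertex property} of tree decompositions, namely that $B_t$ and every child bag $B_{t_i}$ share at least one vertex, which by the prover's assignment already stores $\sigma(t_i)$. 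The vertices in the root bag additionally check that $\sigma(\mathrm{root})$ is accepting. Completeness is immediate from the correctness of $\m{A}_{\varphi,w}$. For soundness, if $\tw(G) > 3k+2$ Lemma~\ref{lem:technique1-informal} already forces a rejection; otherwise, any state labelling surviving all local transition checks coincides, by a bottom-up induction along $\cT$, with the unique run of $\m{A}_{\varphi,w}$, and hence an accepting root state guarantees $G \models \varphi$.

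The main obstacle is that the local transition check at a node $t$ needs information that is not directly accessible in a single hop of~$G$, namely the children states and some edges of $E[B_t]$ that may lie between vertices far apart in~$G$. This is precisely the difficulty already handled in the proof of Lemma~\ref{lem:technique1-informal}: the transfer mechanism used there to certify the tree decomposition is piggy-backed to relay these $O(\log n)$-sized pieces of information between any two vertices sharing a bag, without extra congestion, since each bag has only $O(k)$ vertices and each vertex $v$ participates in at most as many bags as it already needs to support the Lemma~\ref{lem:technique1-informal} certification. Consequently, the total additional certificate stays within $O(\log^2 n)$ bits per vertex.
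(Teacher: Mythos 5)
Your overall strategy --- take the coherent tree decomposition certified by Lemma~\ref{lem:technique1-informal} and certify on top of it a bottom-up run of the finite-state (Borie--Parker--Tovey) evaluation of $\varphi$ --- is the same as the paper's. The gap is in the one step that carries all the difficulty: how the child state $\sigma(t_i)$ reaches, in a \emph{single} communication round of $G$, a vertex that performs the transition check at the parent node $t$. You justify this by the ``shared-vertex property'' ($B_t\cap B_{t_i}\neq\emptyset$) together with an appeal to ``the same bag-transfer primitive'' of Lemma~\ref{lem:technique1-informal}, claimed to relay information ``between any two vertices sharing a bag''. Neither works as stated. A shared vertex $u\in B_t\cap B_{t_i}$ may be arbitrarily far in $G$ from the vertex of $B_t$ that needs $\sigma(t_i)$, so one round of communication does not give access to it; and the transfer primitive of Lemma~\ref{lem:technique1-informal} is not a generic channel between vertices of a common bag: the auxiliary tree $S(i)$ of Lemma~\ref{le:SteinerCongestion} spans only $F_i\cup\{\ell_i\}$ inside $G[V_i\setminus B_{p(i)}]$, and it is used to check that these vertices carry \emph{identical} prover-written messages, not to route per-child data across a bag (in particular it never touches the vertices of $B_i\cap B_{p(i)}$, which are exactly the shared vertices you want to exploit). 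The paper's actual mechanism is different and is the heart of the proof: coherence guarantees (Lemma~\ref{le:leaders}) an exit vertex $\ell_j\in V_j\setminus B_i$ adjacent \emph{in $G$} to a designated vertex $w\in F_i$ in charge of $j$; the homomorphism class of $G_j$ is written on $\ell_j$ (and kept consistent over $S(j)$), so $w$ reads it in one hop, aggregates all its children into the class of an intermediate terminal graph $G_i[w]$, and the classes of the $G_i[w]$ for $w\in F_i$ are then glued into the class of $G_i$. Without this routing your local transition check cannot be executed in one round.

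A second, related problem is your assumption of a nice tree decomposition with at most two children per node, used to bound the bookkeeping (pointers to all children and their states). The decomposition certified by Lemma~\ref{lem:technique1-informal} has unbounded fan-in, and binarizing it would increase its depth beyond $O(\log n)$, which in turn inflates the $O(k^2\log^2 n)$ main messages that list all bags on the root-to-node path. The paper avoids storing any child list at the parent precisely by having each child push its state upward through its own exit vertex, and by letting the in-charge vertex glue its children's contributions sequentially. The remaining ingredients of your argument (recovering $E[B_t]$ from the certified edge lists, the bottom-up soundness induction, the accepting check at the root) do match the paper.
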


\subsection{Related Work}

The ability to detect illegal configurations of a distributed system was originally motivated by the design of fault-tolerant algorithms, especially self-stabilizing algorithms~\cite{AfekKY97,AwerbuchPV91,ItkisL94}. The notion of distributed certification as used in this paper originated from the seminal paper~\cite{KormanKP10} defining \emph{proof-labeling schemes} (PLS). We actually use a slight variant of PLS called \emph{locally checkable proofs} (LCP)~\cite{GoosS16}, which enables exchanging not only the certificates between the processing nodes, but also local states, including their IDs. Another related notion is \emph{non-deterministic local decision} (NLD)~\cite{FraigniaudKP13} in which the certificates must not depend on the IDs given to the processing nodes. Distributed certification has been extended to various directions, including randomized PLS~\cite{FraigniaudPP19}, approximate PLS~\cite{Censor-HillelPP20,EmekG20}, local hierarchies~\cite{BalliuDFO18,FeuilloleyFH21}, interactive proofs~\cite{KolOS18,NaorPY20}, and even, recently, zero-knowledge distributed certification~\cite{BKO22}. All the aforementioned papers contain a vast collection of certification results for various graph problems. In these papers, each certification protocol is specific of the problem at hand. To our knowledge, the only ``meta-theorem'' in the context of distributed certification is the recent paper~\cite{BoFePi21-MSO}, which shows that every MSO formula can be locally certified on graphs with bounded \emph{treedepth} using certificates on $O(\log n)$ bits. We show that the same result holds for the larger class of graphs with bounded \emph{treewidth}, to the cost of slightly larger certificates, on $O(\log^2 n)$ bits. We are therefore partially answering the questions raised in~\cite{BoFePi21-MSO}, asking whether it is ``possible to certify any MSO formula on bounded treewidth graphs'', and ``to certify that the graph itself has treewidth at most~$k$'', using small certificates. 

In framework of sequential algorithms, there is a large literature on ``meta-theorems'' proving that large families of combinatorial properties (typically expressed using some form of logic formulae) can be efficiently decided on particular graph classes. In addition to Courcelle's (meta) theorem~\cite{Courcelle90} on MSO properties on graphs with bounded treewidth, it is worth mentioning the recent results establishing that properties expressible in \emph{first-order logic} can be verified in polynomial time on graphs of bounded \emph{twinwidth}~\cite{BKTW20}, as well as on \emph{nowhere-dense} graphs~\cite{GKS17}. Both graph classes include planar graphs, and thus include graphs with arbitrarily large treewidth. Our work is participating to the general objective of  extending these results to the framework of distributed computing.

%
%

\section{Preliminaries}

\subsection{Distributed Certification} 

We consider networks modeled as connected simple graphs. Every vertex is a processing element, and the vertices exchange messages along the edges of the graph. We systematically denote by $n$ the number of vertices in the considered graph. The vertices of a network/graph $G=(V,E)$ are given distinct identifiers (IDs), and we denote by $\ID(v)$ the identifier of vertex~$v\in V$. These identifiers are not necessarily between~1 and~$n$, but we adopt the standard assumption stating that IDs can be stored on $O(\log n)$ bits. 

We consider boolean predicates on labeled graphs, i.e., graphs for which every vertex~$v$ is given a label $\ell(v)\in\{0,1\}^*$. These labels may represent a way to mark vertices (e.g., those in a dominating set), a color (e.g., in graph coloring), or any value depending on the graph property at hand. Given a boolean predicate $\m{P}$ on labeled graphs, a \emph{locally checkable proof}~\cite{GoosS16} 
for  $\m{P}$ is a prover-verifier pair. The prover is a non-trustable oracle with unbounded computing power. Given any labeled graph $(G,\ell)$, the prover assigns a certificate $c(v)\in\{0,1\}^*$ to every vertex $v\in V$. The verifier is a 1-round distributed algorithm running at all vertices of the graph. Given a labeled graph $(G,\ell)$ with a certificate assigned at every vertex, the vertices exchange their identifiers, labels, and certificates, between neighbors, and compute an output, accept or reject. To be correct, the pair prover-verifier must satisfy two conditions: 
\begin{description}
\item[Completeness:] If $(G,\ell)\models \m{P}$, then, for every ID-assignment to the vertices, there must exist a certificate assignment by the prover to the vertices such that the verifier accepts at all vertices.  
\item[Soundness:] If $(G,\ell)\not\models \m{P}$, then, for every ID-assignment to the vertices, and for every certificate assignment by the prover to the vertices, it must be the case that the verifier rejects in at least one vertex. 
\end{description}

\subsection{Tree Decompositions and Terminal Recursive Graphs} 

Let us recall the classical definition of treewidth and tree decompositions, due to Robertson and Seymour~\cite{RoSe84}.

\begin{definition}\label{de:treedec} 
A \emph{tree decomposition} of a graph $G = (V,E)$ is a pair $(T, B)$  where ${T=(I,F)}$ is a tree, 
and $B=\{B_i, i \in I\}$ is a collection of subsets of vertices of $G$, called \emph{bags}, such that the following conditions hold:
\begin{itemize}
\item For every $v \in V$, there exists $i\in I$ such that $v \in B_i$;
\item For every $e = \{u,v\} \in E$ there is $i\in I$ such that $\{u, v\} \subseteq B_i$;
\item For every $v \in V$, the set $\{i \in I : v \in B_i\}$ forms a connected subgraph of $T$. 
\end{itemize}
The \emph{width} of a tree decomposition is the maximum size of a bag, minus one. The \emph{tree-width} of a graph $G$, denoted by~$\tw(G)$, is the smallest width of a tree decomposition of~$G$.
\end{definition}

To facilitate the distinction between the original graph $G = (V,E)$ and the decomposition tree $T=(I,F)$, we will speak of the \emph{nodes} $i \in I$ of $T$ and of the \emph{vertices} $v \in V$ of $G$. 

We consider tree decompositions as rooted, i.e., we fix some node $r \in I$ as the root of $T = (I, F)$. For a node $i \in I\setminus\{r\}$, we denote by $p(i)$ its parent in $T$, and set $p(r)=\bot$. For $i\in I$, we denote by $T_i$ the subtree of $T$ rooted in $i$, and by $V_i$ the subset of vertices of~$G$ in the bags of $T_i$, i.e., $V_i=\cup_{j\in V(T_i)}B_j$. Also, for $i \in I\setminus\{r\}$, we define $F_i = B_i \setminus B_{p(i)}$.  For the root~$r$, we set $B_{p(r)}  = \varnothing$ and $F_r = B_r$.  Given a rooted tree $T = (I, F)$, and two nodes of $i,j \in I$, we denote by $j \preceq i$ the property that $j$ is a descendant of $i$ in~$T$. 

 Graphs of bounded treewidth can also be defined recursively, based on a graph grammar. Let $w$ be a 
 positive integer. 
 A \emph{$w$-terminal graph} is a graph  $(V,E)$ together with a \emph{totally ordered} set $W \subseteq V$ of at most $w$ distinguished vertices. Vertices of $W$ are called the \emph{terminals} of the graph, and we denote by $\tau(G)$ the number of its terminals. Since $W$ is totally ordered, we can speak of the $r$th terminal, for 
$1\leq r \leq w$. 
 Since in our case vertices are given distinct identifiers, one can view $W$ as ordered w.r.t. these identifiers.

The class of \emph{$w$-terminal recursive graphs} is defined starting from \emph{$w$-terminal base graphs} through a sequence of \emph{composition operations}. A \emph{$w$-terminal base graph} is a $w$-terminal graph of the form $(V,W,E)$ with $W=V$. A \emph{composition operation} $f$ acts on one or two $w$-terminal graphs producing a new $w$-terminal graph as follows. 

When $f$ is of arity 2,  graph $G = f(G_1, G_2)$ is obtained by firstly making disjoint copies of the two graphs $G_1$ and $G_2$, then ``glueing'' together some terminals of $G_1$ and $G_2$. The glueing performed by  $f$ is represented by a matrix $m(f)$ having $\tau(G) \leq w$ rows and two columns, with integer values between $0$ and $\tau(G)$. At row $r$ of the matrix, $m_{rc}(f)$ indicates which terminal of each $G_c, c \in \{1,2\}$ is identified to terminal number $r$  of graph $G$. If $m_{rc}(f) = 0$, then no terminal of $G_c$ is identified to terminal $r$ of $G$ (in particular, if  $m_{r1}(f) = m_{r2}(f) = 0$ it means that terminal $r$ of $G$ is a new vertex, but this situation will not occur in our constructions). Moreover, a terminal of $G_c$ is identified to at most one terminal of $G$, i.e., each non-zero value in $1,\dots, \tau(G_c)$  appears at most once in column $c$ of $m(f)$. For an illustration, see, e.g., Figure~\ref{fig:twgram}.

When $f$ is of arity 1, the corresponding matrix $m(f)$ has a unique column. Graph $G = f(G_1)$ is obtained as before, by identifying terminal $m_{i1}$ of $G_1$ to terminal $r$ of $G$. Note that in this case $G$ and $G_1$ have exactly the same vertex and edge sets, and the terminals of $G$ form a subset of the terminals of $G_1$. 

We point out that the number of possible different matrices and hence of different operations is bounded by a function on $w$. 

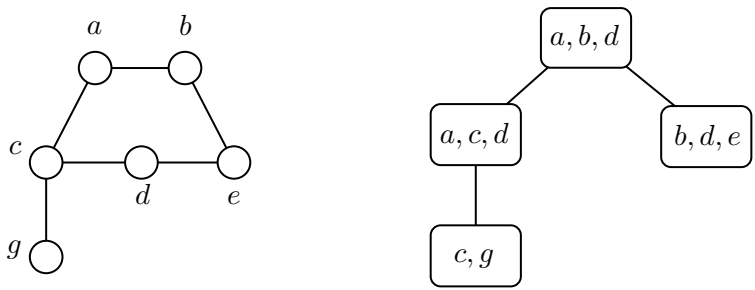
\begin{figure}[htbp]
\begin{center}
\tikzset{every picture/.style={line width=0.75pt}} 

\begin{tikzpicture}[x=0.75pt,y=0.75pt,yscale=-1,xscale=1]

\draw    (337.5,147.14) -- (337.5,87.14) ;
\draw    (392.5,37.86) -- (337.5,87.14) ;
\draw    (392.5,37.86) -- (452.5,87.14) ;
\draw    (216.85,102.68) -- (170.68,102.68) ;
\draw    (192.41,55.15) -- (216.85,102.68) ;
\draw    (192.41,55.15) -- (147.59,55.15) ;
\draw    (170.68,102.68) -- (123.15,102.68) ;
\draw    (147.59,55.15) -- (123.15,102.68) ;
\draw    (123.15,102.68) -- (123.15,150.21) ;
\draw  [fill={rgb, 255:red, 255; green, 255; blue, 255 }  ,fill opacity=1 ] (162.53,102.68) .. controls (162.53,98.18) and (166.18,94.53) .. (170.68,94.53) .. controls (175.18,94.53) and (178.83,98.18) .. (178.83,102.68) .. controls (178.83,107.18) and (175.18,110.83) .. (170.68,110.83) .. controls (166.18,110.83) and (162.53,107.18) .. (162.53,102.68) -- cycle ;
\draw  [fill={rgb, 255:red, 255; green, 255; blue, 255 }  ,fill opacity=1 ] (115,150.21) .. controls (115,145.71) and (118.65,142.06) .. (123.15,142.06) .. controls (127.65,142.06) and (131.3,145.71) .. (131.3,150.21) .. controls (131.3,154.71) and (127.65,158.36) .. (123.15,158.36) .. controls (118.65,158.36) and (115,154.71) .. (115,150.21) -- cycle ;
\draw  [fill={rgb, 255:red, 255; green, 255; blue, 255 }  ,fill opacity=1 ] (115,102.68) .. controls (115,98.18) and (118.65,94.53) .. (123.15,94.53) .. controls (127.65,94.53) and (131.3,98.18) .. (131.3,102.68) .. controls (131.3,107.18) and (127.65,110.83) .. (123.15,110.83) .. controls (118.65,110.83) and (115,107.18) .. (115,102.68) -- cycle ;
\draw  [fill={rgb, 255:red, 255; green, 255; blue, 255 }  ,fill opacity=1 ] (139.44,55.15) .. controls (139.44,50.65) and (143.09,47) .. (147.59,47) .. controls (152.09,47) and (155.74,50.65) .. (155.74,55.15) .. controls (155.74,59.65) and (152.09,63.3) .. (147.59,63.3) .. controls (143.09,63.3) and (139.44,59.65) .. (139.44,55.15) -- cycle ;
\draw  [fill={rgb, 255:red, 255; green, 255; blue, 255 }  ,fill opacity=1 ] (208.7,102.68) .. controls (208.7,98.18) and (212.35,94.53) .. (216.85,94.53) .. controls (221.35,94.53) and (225,98.18) .. (225,102.68) .. controls (225,107.18) and (221.35,110.83) .. (216.85,110.83) .. controls (212.35,110.83) and (208.7,107.18) .. (208.7,102.68) -- cycle ;
\draw  [fill={rgb, 255:red, 255; green, 255; blue, 255 }  ,fill opacity=1 ] (184.26,55.15) .. controls (184.26,50.65) and (187.91,47) .. (192.41,47) .. controls (196.91,47) and (200.56,50.65) .. (200.56,55.15) .. controls (200.56,59.65) and (196.91,63.3) .. (192.41,63.3) .. controls (187.91,63.3) and (184.26,59.65) .. (184.26,55.15) -- cycle ;
\draw  [fill={rgb, 255:red, 255; green, 255; blue, 255 }  ,fill opacity=1 ] (370,30) .. controls (370,27.24) and (372.24,25) .. (375,25) -- (410,25) .. controls (412.76,25) and (415,27.24) .. (415,30) -- (415,50) .. controls (415,52.76) and (412.76,55) .. (410,55) -- (375,55) .. controls (372.24,55) and (370,52.76) .. (370,50) -- cycle ;
\draw  [fill={rgb, 255:red, 255; green, 255; blue, 255 }  ,fill opacity=1 ] (315,78.29) .. controls (315,75.52) and (317.24,73.29) .. (320,73.29) -- (355,73.29) .. controls (357.76,73.29) and (360,75.52) .. (360,78.29) -- (360,99) .. controls (360,101.76) and (357.76,104) .. (355,104) -- (320,104) .. controls (317.24,104) and (315,101.76) .. (315,99) -- cycle ;
\draw  [fill={rgb, 255:red, 255; green, 255; blue, 255 }  ,fill opacity=1 ] (430,79.29) .. controls (430,76.52) and (432.24,74.29) .. (435,74.29) -- (470,74.29) .. controls (472.76,74.29) and (475,76.52) .. (475,79.29) -- (475,100) .. controls (475,102.76) and (472.76,105) .. (470,105) -- (435,105) .. controls (432.24,105) and (430,102.76) .. (430,100) -- cycle ;
\draw  [fill={rgb, 255:red, 255; green, 255; blue, 255 }  ,fill opacity=1 ] (315,139.29) .. controls (315,136.52) and (317.24,134.29) .. (320,134.29) -- (355,134.29) .. controls (357.76,134.29) and (360,136.52) .. (360,139.29) -- (360,160) .. controls (360,162.76) and (357.76,165) .. (355,165) -- (320,165) .. controls (317.24,165) and (315,162.76) .. (315,160) -- cycle ;

\draw (142,30) node [anchor=north west][inner sep=0.75pt]    {$a$};
\draw (188,27) node [anchor=north west][inner sep=0.75pt]    {$b$};
\draw (103,90.4) node [anchor=north west][inner sep=0.75pt]    {$c$};
\draw (166,112) node [anchor=north west][inner sep=0.75pt]    {$d$};
\draw (212,116) node [anchor=north west][inner sep=0.75pt]    {$e$};
\draw (102,140.4) node [anchor=north west][inner sep=0.75pt]    {$g$};
\draw (373,32) node [anchor=north west][inner sep=0.75pt]    {$a,b,d$};
\draw (318,81) node [anchor=north west][inner sep=0.75pt]    {$a,c,d$};
\draw (325,145) node [anchor=north west][inner sep=0.75pt]    {$c,g$};
\draw (435,82) node [anchor=north west][inner sep=0.75pt]    {$b,d,e$};

\end{tikzpicture}
\caption{Graph $G$ and a tree decomposition.}
\label{fig:ExTrDec}
\end{center}
\end{figure}

\begin{proposition}[Theorem 40 in~\cite{Bodlaender98arb}]\label{pr:gramtw}
Graph $H=(V,W,E)$ is $(w+1)$-terminal recursive if and only if there exists a tree decomposition of $G=(V,E)$, of width at most $w$, having $W$ as root bag. Hence the grammar of $(w+1)$-terminal recursive graphs constructs exactly the graphs of treewidth at most $w$. 
\end{proposition}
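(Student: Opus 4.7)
The plan is to establish the equivalence by induction in both directions, carefully tracking how the root bag corresponds to the terminal set.

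For the direction from terminal recursive to tree decomposition, I would proceed by structural induction on the derivation of $H$. A base graph $H=(V,V,E)$ with $|V|\le w+1$ admits a single-bag tree decomposition with bag $V=W$, of width at most $w$. For an arity-1 step $H=f(H_1)$, the underlying graph $G$ coincides with the underlying graph of $H_1$ and $W\subseteq W_1$; starting from the inductive tree decomposition of $H_1$ rooted at a bag equal to $W_1$, I would add a new root whose bag is $W$, which remains valid because $W\subseteq W_1$ preserves the connectivity property. For an arity-2 step $H=f(H_1,H_2)$, I would attach the two inductive decompositions as subtrees of a new root whose bag is $W\subseteq V(H)$. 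The width bound follows from $|W|\le w+1$; every edge of $G$ lies inside $H_1$ or $H_2$, hence is covered; and for any identified pair of glued terminals, the bags containing it form a connected subtree because the new root bridges the two subtrees (each of which already contains that vertex on its side).

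For the converse direction, I would first transform the given tree decomposition into a nice tree decomposition of the same width and the same root bag (a standard construction). Then by bottom-up induction on the nice decomposition, I would build, for every node $i$, a $(w+1)$-terminal recursive description of $G[V_i]$ whose totally ordered terminal set is $B_i$. A leaf is handled by a base graph on $B_i$. At a forget node with $B_i=B_j\setminus\{v\}$, I would apply the arity-1 operation that forgets terminal $v$ from the inductive graph at $j$. At a join node with $B_{j_1}=B_{j_2}=B_i$, I would glue the two inductive graphs by identifying, for each $u\in B_i$, the copy of $u$ in $H_{j_1}$ with the copy of $u$ in $H_{j_2}$. At an introduce node with $B_i=B_j\cup\{v\}$, I would first create an auxiliary base graph on vertex set $B_i$ containing exactly the edges between $v$ and its neighbors in $B_j$, and then glue it onto $H_j$ by identifying the $|B_j|$ terminals of $B_j$; the result has $B_i$ as its terminal set. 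At the root we obtain $G$ with terminal set $W$, as required.

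The main obstacle is the introduce step: one must check that the edge set produced by the recursive construction agrees exactly with $E(G[V_i])$. This relies on the tree-decomposition axioms implying that any vertex $v$ that is first introduced at node $i$ can only be adjacent, inside $V_i$, to vertices of $B_i$, so that all edges incident to $v$ inside $G[V_i]$ can indeed be installed by the small auxiliary base graph. A secondary but non-trivial bookkeeping concern is to specify the matrices $m(f)$ of the composition operations precisely, since terminals are totally ordered and each matrix must correctly match indices of the ordered terminal sets at each step; this is a routine but tedious verification that must be carried out to complete the argument.
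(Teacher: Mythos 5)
Your argument is correct, but it is worth noting that the paper does not actually prove this statement: it imports it as Theorem~40 of Bodlaender's survey and only \emph{sketches} the direction from a tree decomposition to a $(w+1)$-expression, because that construction (and not the equivalence itself) is what Section~5 reuses. For that direction your route also differs: you first normalize to a \emph{nice} tree decomposition and handle introduce/forget/join nodes separately, whereas the paper works directly on the given (arbitrary-degree) rooted decomposition, associating to each node $i$ the three terminal graphs $G^b_i$, $G_i$, $G^+_i$, forming $G^+_j=f(G_j,G^b_i)$ for each child $j$ (which simultaneously plays the role of your introduce and forget steps, since terminals of $B_j\setminus B_i$ are dropped and the edges of $G[B_i]$ are installed via the base graph $G^b_i$), and then obtaining $G_i$ by consecutively glueing all the $G^+_j$ on $B_i$ with the identity matrix (your join step, iterated). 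The two constructions are equivalent, but the paper's version is the one whose intermediate objects ($G_i$, $G_i[w]$, $G^+_j$) are certified vertex-by-vertex in the distributed protocol, so avoiding the nice-decomposition normalization keeps the expression aligned with the bags actually encoded in the certificates; your version buys a cleaner case analysis at the price of an extra preprocessing step. Your forward direction (terminal-recursive $\Rightarrow$ tree decomposition), done by structural induction on the derivation with a fresh root bag $W$, does not appear in the paper at all and is a sound completion of the equivalence; the one point to make fully explicit there is that any pair of glued terminals necessarily corresponds to a common row of $m(f)$ and hence lies in $W$, which is what makes the new root bag bridge the two subtrees for the connectivity axiom.
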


Let us sketch briefly here how a tree decomposition of $G=(V,E)$ of width $w$ can be transformed into a $(w+1)$-expression of the same graph. To each node $i$ of the tree decompositions, we associate three $(w+1)$-terminal graphs:
\begin{itemize}
\item $G^b_i = (B_i, B_i, E(G[B_i]))$, the $(w+1)$-terminal base graph corresponding to graph $G[B_i]$ induced by  bag $B_i$;
\item $G_i = (V_i, B_i, E(G[V_i]))$, corresponding to $G[V_i]$, with bag $B_i$ as set of terminals;
\item  If $i$ differs from the root, $G^+_i = (V_i \cup B_{p(i)}, B_{p(i)}, E(G[V_i \cup B_{p(i)} ]))$ corresponding to the graph induced by $V_i \cup B_{p(i)}$, with $B_{p(i)}$ as set of terminals.
\end{itemize}

Let us describe how to compute the $(w+1)$-expression of these graphs, by parsing bottom-up the tree decomposition (see also Figure~\ref{fig:twgram} applied to the tree decomposition of Figure~\ref{fig:ExTrDec}).

\begin{figure}[h!]
\centering
\scalebox{0.85}{
\input{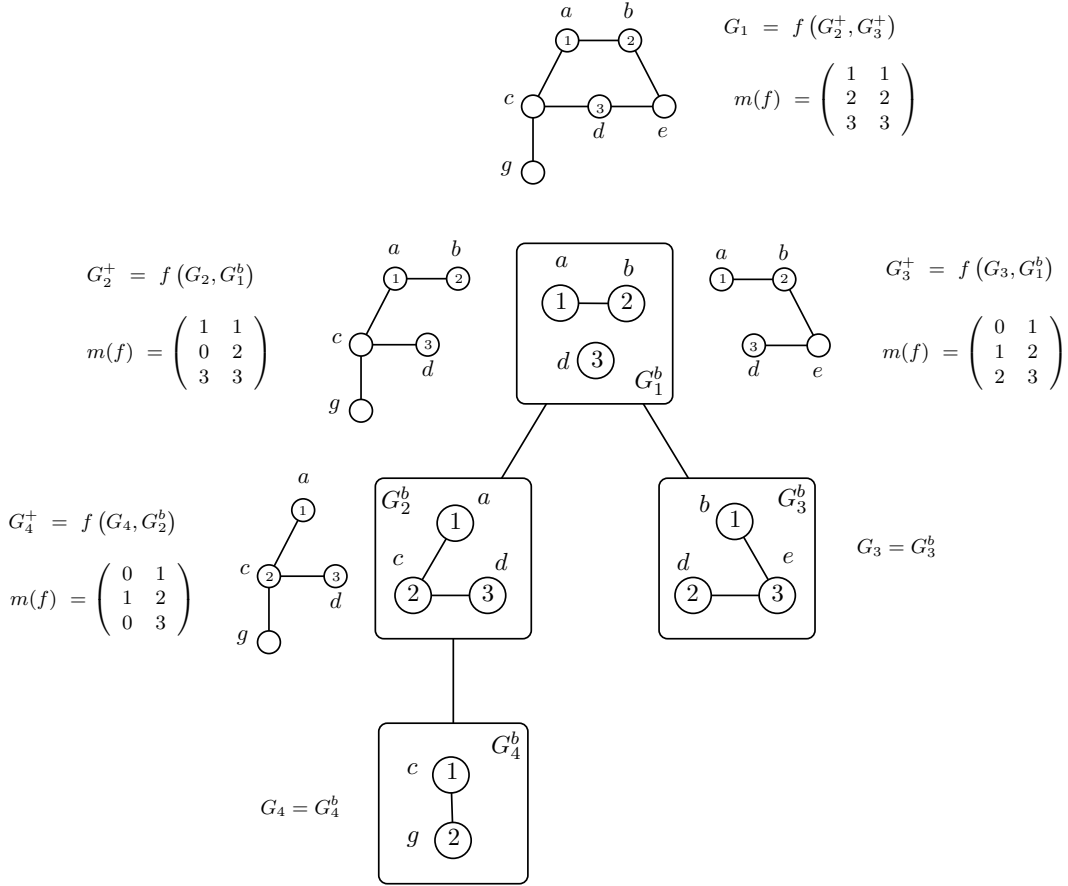}
}
\caption{From a tree decomposition of width $w$ to a $(w+1)$-expression.}
\label{fig:twgram}
\end{figure}

When $i$ is a leaf, $G_i = G^b_i$ is a $(w+1)$-terminal base graph.  Assume now that $i$ is not a leaf and let $Children(i)$ be the children of node $i$ in the decomposition tree. For each $j \in Children(i)$, we already possess an expression of the $(w+1)$-terminal graph $G_j = (V_j, B_j, E(G[V_j]))$. Observe that $G^+_j$ is obtained from a glueing of $G_j$ and the base graph $G^b_i$, where the terminals of $G_j$ contained in $B_j \cap B_i$ are glued on the corresponding terminals of $G^+_j$, and the others become non-terminals. Eventually, if $i$ has more than one child, then $G_i$ is obtained by the consecutive glueing of all $G^+_j$, $j \in Children(i)$, where the glueing is performed on $B_i$ by the same matrix $m(f)$ having $m_{r1}(f) = m_{r2}(f)= r$, for $1 \leq r \leq |B_i|$.

\subsection{Regular Properties and MSO}

We consider graph properties $\cP(G)$ assigning to each graph $G$ a boolean value. We have in mind properties expressible in Monadic Second Order Logic, like ``$G$ is not 3-colourable'', ``$G$ does not contain a given minor'', etc. Nevertheless, technically, we do not need the definition of MSO formulae, and the interested reader may refer to~\cite{CoEn12}; we only need the fact that MSO properties are \emph{regular}, in the sense defined below. By Courcelle's theorem, such properties can be decided in linear (sequential) time on graphs of bounded treewidth, if the tree decomposition (or the corresponding expression as a terminal recursive graph) is part of the input.

\begin{definition}[regular property]\label{de:reg}
A graph property $\cP$ is called \emph{regular} if, for any value $w$, we can associate a finite set $\cC$ of \emph{homomorphism classes} and a \emph{homomorphism function} $h$, assigning to each $w$-terminal recursive graph $G$ a class $h(G) \in \cC$ such that:
\begin{enumerate}
\item If $h(G_1) = h(G_2)$ then $\cP(G_1) = \cP(G_2)$.
\item For each composition operation $f$ of arity 2 there exists a function $\odot_f: \cC \times \cC \rightarrow \cC$ such that, for any two $w$-terminal recursive graphs $G_1$ and $G_2$,
 $$h(f(G_1,G_2)) = \odot_{f}(h(G_1),h(G_2))$$
and for each composition operation $f$ of arity 1 there is a function $\odot_{f}: \cC \rightarrow \cC$ such that, for any $w$-terminal recursive graph $G$,
$$h(f(G)) = \odot_{f}(h(G)).$$
\end{enumerate}
\end{definition}

We illustrate this definition on the property ``$G$ is not 3-colourable''. We can choose, as homomorphism  $h(G=(V,W,E))$, the set of all three-partitions $(W_1, W_2, W_3)$ of the set $W$ of terminals, such that graph $G$ has, as three colouring,  the one where each colour $i \in \{1,2,3\}$ intersects $W$ exactly in the set $W_i$. Observe that graph $G$ satisfies the 
property of not being 3-colourable if and only if its homomorphism class is the empty set. It is a matter of exercise to figure out how to compute the homomorphism 
class of a base $w$-terminal graph (by enumerating all its three-partitions into independent sets), and how to compute functions $\odot_f$ updating the class of the graph after a composition operation~$f$.

The first condition of Definition~\ref{de:reg} separates the classes into \emph{accepting} ones (i.e., classes $c \in \cC$ such that $h(G)=c$ implies that $\cP(G)$ is true) and \emph{rejecting} ones (i.e., classes $c \in \cC$ such that  $h(G)=c$ implies that $\cP(G)$ is false).
In full words, the second condition states that, if we perform a composition operation on two graphs (resp. one graph), the homomorphism class of the result can be obtained from the homomorphism classes of the graphs on which these operations are applied. Therefore, if a $w$-terminal recursive graph is given together with its expression in this grammar, and if moreover we know how to compute the homomorphism classes of the base graphs and the composition functions $\odot_f$ over all possible composition operations $f$, then the homomorphism class of the whole graph for a regular property $\cP$ can be obtained by dynamic programming. We simply need to parse the expression from bottom to top and, at each node, we compute the class of the corresponding sub-expression thanks to the second condition of regularity. At the root, the property is true if and only if we are in an accepting class. 

\begin{proposition}[\cite{BoPaTo92,Courcelle90}]\label{pr:reg}
Any property $\cP$ expressible by a $MSO$ formula is regular. Moreover, given the $MSO$ formula $\varphi$ and parameter $w$, one can explicitely compute the set of classes, the homomorphism function for all $w$-terminal base graphs as well as the composition functions $\odot_f$ of all possible composition operations $f$. 
\end{proposition}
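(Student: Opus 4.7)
The plan is to instantiate the classical Myhill--Nerode characterization of MSO definability. Fix an MSO formula $\varphi$ and let $q$ be its quantifier rank. For the parameter $w$, I view each $w$-terminal graph $G=(V,W,E)$ as a relational structure over the graph signature enriched by $w$ constant symbols, interpreted by the ordered terminals of $W$ (with a distinguished ``undefined'' value when $\tau(G)<w$). I will say that two such structures are \emph{$q$-equivalent}, written $G_1\equiv_q G_2$, if they satisfy exactly the same MSO sentences of quantifier rank at most $q$ in this enriched signature. I take $\cC$ to be the set of $\equiv_q$-equivalence classes and set $h(G)=[G]_{\equiv_q}$.

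The first step is to argue that $\cC$ is finite. This is the standard observation that, up to logical equivalence, there are only finitely many MSO formulas of quantifier rank at most $q$ over a fixed finite vocabulary; each class is characterized by a single ``Hintikka $q$-type'', and the set of types can be enumerated effectively. Condition~1 of Definition~\ref{de:reg} is then immediate: since $\varphi$ has quantifier rank at most $q$, if $h(G_1)=h(G_2)$ then $G_1\models\varphi \iff G_2\models\varphi$, and hence $\cP(G_1)=\cP(G_2)$.

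The heart of the proof is Condition~2: the equivalence $\equiv_q$ is a \emph{congruence} with respect to every composition operation $f$. Concretely, if $G_1\equiv_q G'_1$ and $G_2\equiv_q G'_2$, then $f(G_1,G_2)\equiv_q f(G'_1,G'_2)$, and analogously in arity~$1$. I would prove this via the Ehrenfeucht--Fra\"iss\'e game for MSO: given winning strategies for Duplicator on $(G_1,G'_1)$ and $(G_2,G'_2)$ for $q$ rounds, I combine them into a winning strategy on the glueings. The key point is that the matrix $m(f)$ identifies only terminals, which in the enriched signature are constants and are therefore automatically matched across the two sides; any point set or vertex subset chosen by Spoiler on the composed graph decomposes into its restrictions to (copies of) $G_1$ and $G_2$, and Duplicator can answer each restriction using the corresponding individual strategy. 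Once the congruence is established, $\odot_f([G_1],[G_2]):=[f(G_1,G_2)]$ is a well-defined function $\odot_f:\cC\times\cC\to\cC$; the unary case is identical.

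The main obstacle is precisely this composition lemma: one has to verify that the constraints imposed by the glueing along the identified terminals do not break Duplicator's strategies. The cleanest route is a syntactic Feferman--Vaught decomposition adapted to glueings of terminal graphs along a bounded set of constants, which has the additional benefit of yielding the effectiveness claim of the proposition. Namely, for each operation $f$ and each pair of $q$-types $(\tau_1,\tau_2)$, the decomposition symbolically computes the $q$-type of any $f(G_1,G_2)$ with $G_1$ of type $\tau_1$ and $G_2$ of type $\tau_2$; the evaluation of $h$ on a $w$-terminal base graph reduces to the finite task of checking each Hintikka $q$-type on a fixed bounded-size structure, and the table of $\odot_f$ is then produced by a finite enumeration. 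Finally, the set of accepting classes is $\{c\in\cC : c\models\varphi\}$, which is computable from the enumeration of types.
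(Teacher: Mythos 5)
Your proposal is correct and follows essentially the same route as the paper, which does not reprove this proposition but cites \cite{BoPaTo92,Courcelle90}: the standard argument in those references is exactly the one you sketch, namely taking the homomorphism classes to be rank-$q$ Hintikka types of $w$-terminal graphs (with the ordered terminals as constants) and showing via a Feferman--Vaught/Ehrenfeucht--Fra\"iss\'e composition lemma that these finitely many types form an effectively computable congruence for the glueing operations.
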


Altogether, this provides an effective algorithm for checking property $\cP(G)$ in $O(n)$ time, by a sequential algorithm, given the $w$-expression (or, equivalently, the tree decomposition of width $w-1$) of the input graph, by computing bottom-up the homomorphism classes.

The notions of MSO and regular properties extend to properties on graphs and vertex subsets, i.e., we can consider properties $\cP(G,X)$ assigning to each graph $G$ and vertex subset $X$ of $G$ a boolean value. This allows to capture properties as ``$X$ is an independent set of $G$'', or  ``$X$ is an dominating set of $G$''. Moreover, the whole framework can capture the problem of computing a (or, in our case, certifying that) set $X$ is of maximum weight among those satisfying $\cP(G,X)$, for graphs with polynomial weights on their vertices. This issue is postponed to Appendix~\ref{app:propt}.

%
%

\section{Preparing the Tree Decompositions}

By a classic result of Bodlaender~\cite{Bodlaender88}, an optimal tree decomposition of graph $G$ can be transformed into a decomposition whose tree is of logarithmic depth, while the size of the bags is at most multiplied by $3$. We strongly rely on such decomposition, plus a connectivity property that we call \emph{coherence}. We say that a rooted tree decomposition of 
a graph $G=(V,E)$ is \emph{coherent} if for every $i\in I$, the set $F_i$ is non empty and the graph $G[V_i \setminus B_{p(i)}]$ is connected.

\begin{lemma}\label{le:logco}
Let $k \geq 1$, and let $G$ be a connected $n$-vertex graph of treewidth at most $k$. Then, $G$ admits a coherent tree decomposition of width at most $3k+2$ and depth $\cO(\log n)$.
\end{lemma}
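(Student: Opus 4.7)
The plan is to construct the desired decomposition in two stages. First, apply the classical Bodlaender construction~\cite{Bodlaender98arb} to an optimal tree decomposition of $G$ of width $k$; this yields a (not necessarily coherent) tree decomposition $(T, B)$ of $G$ with width at most $3k+2$ and depth $O(\log n)$. Then post-process $(T, B)$ to achieve coherence, without increasing either parameter.

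The post-processing consists of two clean-up operations. The first enforces $F_i \neq \emptyset$: any node $i$ with $B_i \subseteq B_{p(i)}$ is contracted by reattaching its children to $p(i)$, which preserves validity, width, and depth. The second enforces that $G[V_i \setminus B_{p(i)}]$ is connected. We process the tree top-down. The root $r$ satisfies the condition trivially, since $G[V_r \setminus B_{p(r)}] = G$ is connected by hypothesis. At any other node $i$, with effective parent bag $B^*$ after ancestor processing, let $C_1, \dots, C_m$ be the connected components of $G[V_i \setminus B^*]$. If $m \geq 2$, we replace $i$ by copies $i_1, \dots, i_m$, where $i_j$ inherits the bag $(B_i \cap B^*) \cup (B_i \cap C_j) \subseteq B_i$ and governs the ``slice'' $(V_i \cap B^*) \cup C_j$ of $V_i$. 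For each child $c$ of $i$ and each $C_j$ such that $V_c \cap C_j \neq \emptyset$, we recursively process $c$ with new parent bag $B_{i_j}$, attaching the result under $i_j$. When $V_c$ intersects several components, the child is effectively split along the same partition, with its bag restricted accordingly.

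Since every new bag is contained in an original bag, the width remains at most $3k+2$, and because all splits are horizontal (they replace a single node by several siblings), the depth remains $O(\log n)$. Coherence follows from the construction: each $G[V_{i_j} \setminus B^*] = G[C_j]$ is connected, and after the first clean-up one also has $F_{i_j} \neq \emptyset$. The main obstacle is to verify that the outcome is still a valid tree decomposition, in particular that each vertex $v$ of $G$ appears along a connected subtree of the new tree. This will follow from a propagation argument: every component $C$ of $G[V_i \setminus B^*]$ is contained in a unique component of the analogous decomposition at the parent level (because $C$ is connected and contained in $V_{p(i)}$ minus the relevant ancestor bag), so the recursive splits are consistent along root-to-leaf paths, and the restricted bags of $v$'s appearances remain connected in the refined tree.
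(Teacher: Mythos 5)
Your overall strategy is the same as the paper's: start from Bodlaender's logarithmic-depth $3$-approximation, then repair coherence by splitting each subtree $T_i$ along the connected components of $G[V_i\setminus B_{p(i)}]$ and contracting nodes with empty $F_i$. However, there is a concrete gap in the order in which you perform the two repairs. You contract empty-$F$ nodes \emph{first} and then split, asserting that ``after the first clean-up one also has $F_{i_j}\neq\emptyset$.'' This is false: for a copy $i_j$ associated with component $C_j$, your bag assignment gives $F_{i_j}=B_{i_j}\setminus B^{*}=B_i\cap C_j$, and a component $C_j$ of $G[V_i\setminus B^{*}]$ need not meet $B_i$ at all, even when $F_i\neq\emptyset$. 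For instance, take $G$ a star with center $a$ and leaves $u,y$, with root bag $\{a\}$, a node $i$ with bag $\{a,u\}$, and a child of $i$ with bag $\{a,y\}$: here $F_i=\{u\}\neq\emptyset$, but $V_i\setminus\{a\}$ has the two components $\{u\}$ and $\{y\}$, and the copy for $\{y\}$ gets $F=\emptyset$. So your construction, as written, can output a decomposition violating the non-emptiness half of coherence.

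The fix is exactly what the paper does: perform the connectivity splitting first and the empty-$F$ contraction afterwards, and then verify that contracting a node $i$ with $B_i\subseteq B_{p(i)}$ does not destroy the connectivity property at its reattached children. That verification is not vacuous: it uses the running-intersection property to show that $V_j\cap B_{p(i)}=V_j\cap B_i$ for every child $j$ of the deleted node, hence $V_j\setminus B_{p(j)}$ is unchanged by the reattachment. If you keep your order, you must instead add a second contraction pass after the splitting together with this same preservation argument; either way the missing step should be made explicit. The remaining parts of your argument (width preserved because new bags are subsets of old ones, depth preserved because splits replace a node by siblings, and the consistency of the component refinement down root-to-leaf paths) match the paper's reasoning and are fine at the paper's own level of detail.
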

\begin{proof}

Firstly, choose a tree decomposition $(T=(I,F), \{B_i, i \in I\})$ of $G$ where $T$ is of logarithmic depth and each bag is of size at most $3k+3$. Such a decomposition exists by~\cite{Bodlaender88}. 

Let us show how to transform this tree decomposition into a coherent one. Firstly we focus on the connectivity condition. Assume the decomposition is not coherent, and let $i$ a node that violates the connectivity condition, closest to the root. Observe that $i$ is different from the root: we can assume w.l.o.g that all bags are non empty, in particular $F_r$ is non empty and $G[V_r] = G$ is connected. Then suppose that $G[V_i \setminus B_{p(i)}]$ is not connected and let $W^1, W^2,\dots, W^p$ be the vertex sets of the connected components of $G[V_i \setminus B_{p(i)}]$. Denote by $N^j$ the neighbourhood of $W^j \in G$, for $1 \leq j \leq p$, and observe that every $N^j$ is a subset of $B_i \cap B_{p(i)}$. For each $j$, $1 \leq j \leq p$, construct a tree decomposition $T^j_i$ of $G[W^j \cup N^j]$ by taking a copy of $T_i$ and the corresponding bags, then restricting the bags to their intersection with  $W^j \cup N^j$. Eventually replace, in the tree decomposition $T$ of $G$, the subtree $T_i$ by the $p$ copies $T^1_i, \dots, T^p_i$, by making their roots adjacent to $p(i)$. Observe that we obtain indeed a tree decomposition of $G$, and the connectivity condition on node $i$ has been mended, in the sense that the $p$ new nodes corresponding to copies of node $i$ satisfy it: for each copy $i^j$ of $i$, we have that $V_{i^j} \setminus B_{p(i^j)} = W^j$. 

Now that the connectivity condition is satisfied for every node, we fix the condition stating that sets $F_i$ must not be empty. If $F_i$ is empty for some node $i$, then $B_i$ is a subset of $B_{p(i)}$. Therefore we can remove node $i$ from the decomposition tree and attach its children directly to the parent $p(i)$ of the deleted node, obtaining a new tree decomposition, without increasing the depth. 
This process can be iterated as long as necessary, hence we my assume that for any node $i$, $F_i$ is non empty. Also observe that the removal of node $i$ does not modify the set $V_j \setminus B_{p(j)}$ for any child $j$ of $i$ in the initial tree $T$, therefore the connectivity condition is preserved for all nodes.
\end{proof}

In our protocol we must be able to communicate, for any node $i$ of the decomposition, some information about $V_i$ to a vertex in the bag corresponding to the parent node $p(i)$, more precisely, to some vertex of $F_{p(i)}$. The following lemma shows the existence a vertex $\ell_i \in V_i \setminus B_{p(i)}$ adjacent in $G$ to some vertex $w \in F_{p(i)}$. 

\begin{lemma}\label{le:leaders}
Let $T=(I,F)$ be a coherent tree decomposition of $G=(V,E)$. Then, for every $i \in I$ different from the root there exists a pair of vertices  $\ell_i \in V_i \setminus B_{p(i)}$ and $w \in F_{p(i)}$ such that $\{w,\ell_i\}\in E$. 

Vertex $\ell_i$ is called the \emph{exit vertex of $i$}, and  $w$ is called the \emph{vertex of $F_{p(i)}$ in charge of node $i$}.
\end{lemma}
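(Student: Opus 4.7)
The plan is to combine two ingredients: the standard separation property of tree decompositions, and the coherence condition applied at the parent node $p(i)$ (rather than at $i$ itself).

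First I will establish the auxiliary observation that any vertex $u \in V_i \setminus B_{p(i)}$ has all its $G$-neighbors in $V_i$. Indeed, the set of nodes of $T$ whose bag contains $u$ is a subtree that meets $T_i$ (since $u \in V_i$) and avoids $p(i)$ (since $u \notin B_{p(i)}$), so it lies entirely inside $T_i$; every edge incident to $u$ lies in one of those bags. The same subtree argument shows that $V_i \cap B_{p(p(i))} \subseteq B_{p(i)}$, so $V_i \setminus B_{p(i)}$ is disjoint from $B_{p(p(i))}$ and hence $V_i \setminus B_{p(i)} \subseteq V_{p(i)} \setminus B_{p(p(i))}$ (with the convention $B_{p(r)} = \varnothing$ when $p(i) = r$).

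Next I will invoke coherence at $p(i)$: the induced subgraph $H := G[V_{p(i)} \setminus B_{p(p(i))}]$ is connected, and both $F_{p(i)}$ (non-empty by coherence at $p(i)$) and $V_i \setminus B_{p(i)}$ (non-empty since $F_i \subseteq V_i \setminus B_{p(i)}$ and coherence at $i$ guarantees $F_i \neq \varnothing$) are subsets of $V(H)$. These two subsets are disjoint, because $F_{p(i)} \subseteq B_{p(i)}$. Picking any $a \in V_i \setminus B_{p(i)}$ and $b \in F_{p(i)}$, a path from $a$ to $b$ inside $H$ must contain a first edge $\{u, w\}$ that crosses out of $V_i \setminus B_{p(i)}$, with $u \in V_i \setminus B_{p(i)}$ and $w \in V(H) \setminus (V_i \setminus B_{p(i)})$. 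By the auxiliary observation $w \in V_i$, so $w \in V_i \cap B_{p(i)}$; combined with $w \notin B_{p(p(i))}$ (because $w \in V(H)$), this yields $w \in B_{p(i)} \setminus B_{p(p(i))} = F_{p(i)}$. Setting $\ell_i := u$ concludes the proof.

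The main subtlety — and the reason coherence at $p(i)$ is indispensable — is ensuring that the ``landing'' vertex $w$ lies in the fresh part $F_{p(i)}$ rather than merely in $B_{p(i)}$. The separation property of tree decompositions, used alone, only pins $w$ down to $B_{p(i)}$, and $w$ could a priori belong to $B_{p(p(i))}$. Carrying the argument out inside the connected set $V_{p(i)} \setminus B_{p(p(i))}$ provided by coherence is precisely what forces the crossing edge to land in $F_{p(i)}$.
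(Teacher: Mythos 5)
Your proof is correct and follows essentially the same route as the paper's: both hinge on the separation property $N_G(V_i\setminus B_{p(i)})\subseteq B_{p(i)}$ together with the coherence (connectivity) condition at the parent node $p(i)$, which forces some neighbor of $V_i\setminus B_{p(i)}$ to land in $F_{p(i)}$ rather than in $B_{p(i)}\cap B_{p(p(i))}$. The paper phrases this as a contradiction (the would-be separator $B_{p(i)}\cap B_{p(p(i))}$ would disconnect $G[V_{p(i)}\setminus B_{p(p(i))}]$), whereas you extract the crossing edge directly from a path; this is a cosmetic difference only.
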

\begin{proof}
Denote $W_i = V_i \setminus B_{p(i)}$. By definition of a tree decomposition the neighbourhood $N_G(W_i)$ of $W_i$ in graph $G$ is a subset of $B_{p(i)}$. We must show that $N_G(W_i)$ contains at least one vertex $w$ in $F_{p(i)}$, which allows to take a $\ell_i \in W_i$ adjacent to $w$ in $G$. Assume by contradiction that $N_G(W_i)$ does not intersect $F_{p(i)}$. In this case $p(i)$ is not the root vertex, and $N_G(W_i) \subseteq B_{p(i)} \setminus F_i$, which is also equal to $B_{p(i)} \cap B_{p(p(i))}$. Therefore $B_{p(i)} \cap B_{p(p(i))}$ separates $W_i$ from the $F_{p(i)}$ in graph $G$. This contradicts the coherence of the tree decomposition, more precisely the connectivity property at node $p(i)$, since $W_i$ and $F_{p(i)}$ are disconnected in $G[V_{p(i)} \setminus B_{p(p(i))}]$.
\end{proof}

In our certification protocols, for each node $i$ of the decomposition tree, the vertices of $F_i$ as well as the exit vertex $\ell_i$ will receive from the prover some information concerning  graph $G_i = G[V_i]$. We will need to ensure that $\ell_i$ and all vertices of $F_i$ received the same information. For this purpose we use trees contained in $G[V_i \setminus B_{p(i)}]$, spanning $\ell_i$ and $F_i$.

\begin{lemma}\label{le:SteinerCongestion}
Consider a coherent tree decomposition $T=(I,F)$ of graph $G=(V,E)$, of depth $O(\log n)$. For each node $i$ of the decomposition tree, there is a subtree $S(i)$ of $G[V_i \setminus B_{p(i)}]$ spanning $F_i$ and the exit vertex $\ell_i$.

Moreover each vertex of $G$ appears $O(\log n)$ times in the family of trees $\cT(G) = \{S(i) \mid i \in I\}$.
\end{lemma}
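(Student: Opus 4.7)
The plan is to dispatch the existence of $S(i)$ in one line using coherence, and to spend the bulk of the argument on the congestion bound. The latter will reduce to identifying the exact set of nodes $i$ for which a given vertex $v$ can possibly lie in the vertex set of $S(i)$.

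For the existence of $S(i)$, coherence guarantees that $G[V_i \setminus B_{p(i)}]$ is connected. Both $F_i = B_i \setminus B_{p(i)}$ (by definition) and the exit vertex $\ell_i$ (by Lemma~\ref{le:leaders}) are contained in $V_i \setminus B_{p(i)}$, so I will define $S(i)$ to be any subtree of $G[V_i \setminus B_{p(i)}]$ spanning $F_i \cup \{\ell_i\}$, for instance a Steiner tree on these terminals, or simply a spanning tree of the whole induced subgraph. For the root $r$ the exit vertex is undefined, and I will take instead a spanning tree of $G[V]=G$, which is connected by assumption.

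For the congestion bound, I fix $v \in V$ and observe that if $v$ appears in $S(i)$ then $v \in V_i \setminus B_{p(i)}$. It is therefore enough to bound the cardinality of $\mathcal{A}_v := \{i \in I : v \in V_i \setminus B_{p(i)}\}$. Let $T_v = \{j \in I : v \in B_j\}$, which by the third tree-decomposition axiom is a connected subtree of $T$, and let $r_v$ denote the node of $T_v$ closest to the root of $T$. The heart of the argument will be the claim that $\mathcal{A}_v$ coincides precisely with the set of ancestors of $r_v$ in $T$ (with $r_v$ itself included). I will verify this by a short case analysis: the membership $v \in V_i$ amounts to $T_i \cap T_v \neq \emptyset$, and since $T_v$ is connected and rooted at $r_v$, this intersection is non-empty if and only if $i$ is an ancestor of $r_v$ or $i \in T_v$. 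Among these, the additional condition $v \notin B_{p(i)}$ eliminates exactly the nodes of $T_v \setminus \{r_v\}$, because for any $i \in T_v \setminus \{r_v\}$ the parent $p(i)$ still lies in $T_v$.

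This characterization gives $|\mathcal{A}_v| \leq 1 + \mathrm{depth}(T)$, which by Lemma~\ref{le:logco} is $O(\log n)$, yielding the claimed bound. The only delicate step is the structural identification of $\mathcal{A}_v$ with a root-to-$r_v$ path in $T$; everything else is routine manipulation of the axioms of tree decompositions.
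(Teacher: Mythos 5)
Your proposal is correct and follows essentially the same route as the paper: existence of $S(i)$ comes directly from the connectivity guaranteed by coherence, and the congestion bound comes from observing that the nodes $i$ with $v \in V_i \setminus B_{p(i)}$ are pairwise comparable under the ancestor relation, hence at most $O(\log n)$ many by the depth bound of Lemma~\ref{le:logco}. Your version is slightly sharper in that it identifies this set exactly as the ancestors of the topmost node of $\{j : v \in B_j\}$, whereas the paper only argues disjointness of $V_i\setminus B_{p(i)}$ and $V_j\setminus B_{p(j)}$ for incomparable $i,j$, but the underlying idea is the same.
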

\begin{proof}
Since our tree decomposition is coherent, for each node $i$ graph $G[V_i \setminus B_{p(i)}]$ is connected so it contains the required subtree $S(i)$. 

Observe is that, if $i$ and $j$ are different nodes of the tree decomposition such that none is ancestor of the other, then sets $V_i \setminus B_{p(i)}$ and $V_j \setminus B_{p(j)}$ are disjoint, by definition of tree decompositions. Therefore, if we fix a vertex $v$ of $G$, the nodes $i$ such that $v$ appears in $S(i)$ are pairwise comparable w.r.t the ancestor relation in the decomposition tree. The decomposition tree is of depth $O(\log n)$ and the conclusion follows.
\end{proof}

\section{A Protocol Certifying a 3-Approximation of the Treewidth}\label{se:3approx}

In this section we describe a protocol certifying a 3-approximation of treewidth. More precisely, we prove the following Lemma.

\begin{lemma}\label{lem:3approx}  For each $k\geq 1$ there is a distributed certification protocol that uses messages of size $O(k^2 \log^2 n)$  and ensures, for any input graph $G$, that:
\[
\left\{\begin{array}{lcl}
\tw(G)\leq k & \Rightarrow & \mbox{there exists a certificate assignment s.t. all nodes accept;}\\ 
\tw(G) > 3k+2 & \Rightarrow & \mbox{for every certificate assignment, at least one node rejects.}\\ 
\end{array}\right.
\] 

\end{lemma}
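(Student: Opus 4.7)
When $\tw(G) \leq k$, Lemma~\ref{le:logco} provides a coherent tree decomposition $(T=(I,F),\{B_i\}_{i\in I})$ of $G$ of width at most $3k+2$ and depth $O(\log n)$. The prover computes this decomposition and distributes certificates encoding $(T,B)$, so that every vertex can locally verify the four properties defining a width-$\leq 3k+2$ tree decomposition: the width bound itself, the two covering axioms of Definition~\ref{de:treedec}, and the connectivity axiom. The principal obstacle is that $T$ lives outside of $G$, that a vertex may belong to many bags, and that the connectivity axiom is inherently non-local. We handle this with two tools from the previous section: coherence, which canonically names each node of $T$ via an element of $F_i$, and the Steiner trees $S(i)\in\cT(G)$ of Lemma~\ref{le:SteinerCongestion}, which let us propagate bag contents between vertices of a common bag while keeping the congestion per vertex in $O(\log n)$.

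\textbf{Certificates.} For each $i\in I$, let $r(i)$ be the vertex of smallest identifier in $F_i$ (well-defined by coherence), and use $\ID(r(i))$ as the name of $i$. By Lemma~\ref{le:leaders}, every non-root $i$ admits an exit vertex $\ell_i$ adjacent in $G$ to a vertex $w_i\in F_{p(i)}$ in charge of $i$. Every vertex $v\in V$ receives the following data:
(i) its home node $i(v)$ (the unique $i$ with $v\in F_i$), the content $B_{i(v)}$ as a list of identifiers, and, for each ancestor $j$ of $i(v)$ in $T$, the triple $(\ID(r(j)),B_j,\mathrm{depth}(j))$ --- since $T$ has depth $O(\log n)$, this takes $O(k\log^2 n)$ bits;
(ii) for every $S(j)\in\cT(G)$ containing $v$, the name of $j$, the content $B_j$, and the identifier of the neighbor of $v$ playing the role of parent in $S(j)$ --- by Lemma~\ref{le:SteinerCongestion}, at most $O(\log n)$ such trees pass through $v$, contributing another $O(k\log^2 n)$ bits;
(iii) the interface data at $\ell_{i(v)}$ and at every $w_j$ for which $v=w_j$, identifying the adjacent partner across $\{\ell_j,w_j\}$ and the certificate of $j$. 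Accounting for pointer fields and local positions in the Steiner trees, the per-vertex certificate fits in $O(k^2\log^2 n)$ bits.

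\textbf{Verification.} Each vertex $v$ sends its certificate to its neighbors in a single round and then checks:
(a) every stored bag has size at most $3k+3$;
(b) $v\in B_{i(v)}$ and $v\notin B_{p(i(v))}$;
(c) for every neighbor $u$, either $u\in B_{i(v)}$ or $v\in B_{i(u)}$;
(d) for every $S(j)$ through $v$, the declared parent of $v$ in $S(j)$ is indeed a neighbor and stores the same name and content for $B_j$; if $v=\ell_j$ is the root of $S(j)$, then the neighbor $w_j\in F_{p(j)}$ stores the certificate of $j$ with matching values;
(e) the ancestor sequence stored in (i) is consistent with the one stored by $r(p(i(v)))$, with strictly increasing depth towards $i(v)$;
(f) for every $x\in B_{i(v)}\setminus F_{i(v)}$, we have $x\in B_{p(i(v))}$, which is a purely local check on the data from (i).

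\textbf{Correctness and main obstacle.} Completeness is immediate: on graphs with $\tw(G)\leq k$, the honest certificates derived from Lemma~\ref{le:logco} pass every check. For soundness, assume every vertex accepts. Checks (a)--(b) produce a width-$\leq 3k+2$ partition of $V$ into non-empty sets $F_i$. Checks (d)--(e) force the claimed parent pointers to assemble a single tree $T$ on which all bag contents are globally consistent, since any discrepancy in a bag $B_j$ would surface either along some edge of $S(j)$ or across the interface $(\ell_j,w_j)$. Check (c) guarantees that every edge of $G$ is covered by some bag. Finally, check (f), applied inductively along the parent pointers, implies that for every vertex $x$ the set $\{i:x\in B_i\}$ is a connected subtree of $T$ rooted at $i(x)$, which is the third axiom of Definition~\ref{de:treedec}. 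Hence $G$ admits a tree decomposition of width at most $3k+2$, so $\tw(G)\leq 3k+2$, contradicting $\tw(G)>3k+2$. The hard part is check (d): ensuring bag consistency across all vertices participating in a bag, without congestion, when a vertex may belong to many bags. This is precisely what the combination of Lemmas~\ref{le:logco}, \ref{le:leaders}, and~\ref{le:SteinerCongestion} delivers, by choosing a tree decomposition admitting logarithmic-congestion Steiner trees for every bag.
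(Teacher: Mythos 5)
Your architecture matches the paper's: a coherent $O(\log n)$-depth decomposition from Lemma~\ref{le:logco}, per-vertex certificates carrying the whole root-path of bags, the Steiner trees of Lemma~\ref{le:SteinerCongestion} for intra-bag consistency, and the exit-vertex/in-charge pairs of Lemma~\ref{le:leaders} for the parent--child interface. Completeness and the $O(k^2\log^2 n)$ size accounting are fine. The gap is in soundness, exactly where the paper spends most of its effort. Your argument for the third axiom of Definition~\ref{de:treedec} is that check (f), applied inductively along parent pointers, lets one walk upward from any node $j$ with $x\in B_j$ while staying inside $\{i : x\in B_i\}$ until reaching a node where $x$ enters the $F$-set. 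This yields connectivity only if that stopping node is \emph{unique}, i.e., only if the declared sets $F_i$ are pairwise disjoint and each vertex is claimed by exactly one of them. You assert this (``checks (a)--(b) produce a partition'') but never prove it, and your checks do not enforce it: a dishonest prover can insert a vertex $x$ into the bag $B_j$ of a node $j$ all of whose certificate-carriers ($F_j$, $S(j)$, $\ell_j$) lie far from $x$ in $G$, declaring $x\in B_j\setminus B_{p(j)}$ there while $x$'s own certificate names a different home node. No vertex of $S(j)$ is adjacent to $x$, so check (d) never confronts the two claims; check (f) at $j$ merely terminates the upward induction at $j$; and $\{i : x\in B_i\}$ ends up disconnected, so the assembled object is not a tree decomposition and the conclusion $\tw(G)\le 3k+2$ does not follow. (Note also that with $F_{i(v)}$ defined as $B_{i(v)}\setminus B_{p(i(v))}$ from the data in (i), check (f) is a tautology on a single vertex's certificate; all its content must come from cross-vertex consistency.) The paper closes this hole with a convexity check on each vertex's own root-path of bags (membership in $B_{j_1}\cap B_{j_2}$ forces membership in all intermediate bags) together with a chain of claims culminating in the statement that $F(u)\neq F(v)$ iff $B(u)\neq B(v)$; that chain is the technical core of the lemma and is absent from your proposal.

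A secondary omission: you do not say how the verifier certifies that the declared $S(j)$ is acyclic and actually \emph{spans} all of the declared $F_j$. The paper uses distance counters plus a field $sub(\cdot)$ aggregating, at each vertex of $S(j)$, the subset of $F_j$ lying in its subtree, so that the root $\ell_j$ can confirm $sub(\ell_j)=F_j$. Without such a mechanism, two vertices both claiming membership in $F_j$ can sit in different components carrying different versions of $B_j$, and your step ``any discrepancy would surface along some edge of $S(j)$'' fails. Likewise, check (e) compares $v$'s ancestor list with that of $r(p(i(v)))$, which is generally not a neighbor of $v$; the comparison must be routed through the single edge between $\ell_{i(v)}$ and its in-charge vertex in $F_{p(i(v))}$ and then re-propagated inside $F_{p(i(v))}$ via $S(p(i(v)))$, which is precisely the suffix-consistency condition the paper states and then exploits.
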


Let us describe the messages that the prover sends to each vertex of $G$, if $\tw(G) \leq k$. These messages describe a coherent tree decomposition of width at mots $3k+2$ and of logarithmic depth, which exists by Lemma~\ref{le:logco}.

We identify node $i$ of the decomposition tree with the number  corresponding to a binary representation of the set of vertices $B_i$ contained in its bag, so $1\leq i \leq n^{\cO(k)}$. In full words, a node is simply identified by the content of its bag, which is possible since coherent tree decompositions have pairwise disjoint bags.

Our protocol distinguishes two types of certificates, namely \emph{main messages} and \emph{auxiliary messages}. Each vertex receives one main messages and $\cO(\log n)$ auxiliary messages. Let us describe each one of them.

\paragraph{Main messages.} These messages are used to encode a tree decomposition, following Definition~\ref{de:treedec}. Each vertex $v$ receives as a certificate the following messages, that we denote $m(v)$:
\begin{enumerate}
\item A number $d = d(v)$, representing the depth of the node $i$ such that $v \in F_i$
\item\label{it:bags} A list of sets $\cB(v) = B_{d}(v), B_{d-1}(v), \dots, B_1(v)$, representing the path of bags from node $i=B_d(v)$ to the root node. 
\item The list of sets $\cF(v) = F_{d}(v), F_{d-1}(v), \dots, F_1(v)$, representing the sets $F_j(v) = B_j(v) \setminus B_{j-1}(v)$, for each $j \in \{1, \dots, d\}$.
\item A list of sets $\cE(v) = E_d(v), \dots, E_1(v)$,  where,  for each $j\in \{1, \dots, d\}$, $E_j(v) \subseteq {B_j(v) \choose 2}$ represents the edge set of $G[B_j(v)]$.
\end{enumerate}

Observe that the size of a main message is $\cO(k^2 \log^2 n)$. 

 \paragraph{Auxiliary messages.} These messages allow to check the consistency of the main messages between vertices of a same set $F_i$, for each node $i$ of the decomposition.

From Lemma \ref{le:SteinerCongestion}, we have that for each node $i$ there is a subtree $S(i)$ connecting all pair of vertices of $F_i$ and the exit vertex $\ell_i$.  The vertices $w$ of $S(i)$ are called \emph{auxiliary vertices for $i$}.  For a vertex $w$, let  us call $Aux(w)$ the set of nodes $i$ such that $w$ is an auxiliary vertex for $i$.  From Lemma \ref{le:SteinerCongestion}, we know that for each $w\in V$, $|Aux(w)| = \cO(\log n)$.  

Each node $w$ receives the set $Aux(w)$ and for each $i \in Aux(w)$ the message $m_{aux}(w,i)$ containing the following information where 

\begin{itemize}
\item $d_{aux}(w,i)$ is the depth of node $i$.
\item $\ell_i(w)$ is a vertex identifier of the exit vertex of $F_i$ (cf. Lemma~\ref{le:leaders}).
\item $\alpha_i(w)$ is a vertex identifier of the vertex in $F_{p(i)}$ in charge of $B_i$  (cf. Lemma~\ref{le:leaders}).
\item $F_i(w)$ is a set of vertices, representing $F_i$. 
%
%
%
%
\item $TreeCert(w)$ is the certificate that receives $w$ in the protocol used to verify that  $S(i)$ is a tree rooted at $\ell_i$ and spanning $F_i(w)$. More precisely $cert(F_i,w) = (parent(w), dist(w), sub(w))$, where: 
\begin{itemize} 
\item $parent(w)$ represent the parent of $w$ in $S(i)$ ($parent(w) = \perp$ if $w = \ell_i(w)$), 
\item $dist(w)$ represents the distance from $w$ to $\ell_i$ in $S(i)$, and 
\item $sub(w)$ represents is the subset of $F_i(v)$ that are descendants of $w$ in $S(i)$. 
\end{itemize}

 \end{itemize}

Observe that for any given vertex $w$ and node $i$, the messages $m_{aux}(w,i)$ is of size $O(k \log n)$. Thanks to Lemma~\ref{le:SteinerCongestion}, a vertex $w$ appears $\cO(\log n)$ times as auxiliary vertex of some node $i$. Therefore, a vertex $w$ receives in total $O(k \log^2 n)$ bits for auxiliary messages. 

\medskip

\paragraph{Verification round.}

Given two vertices $u$ and $v$ such that $d(u)\leq d(v)$, we say that the main message of $u$ is a $d$-suffix of the main message of $v$ if $B_j(u) = B_j(v)$ and $E_j(u) = E_j(v)$ for each $j \in \{1, \dots, d\}$.

Let $d = d(v)$. In the verification round, vertex $v$ verifies the following conditions.
 
\paragraph{Consistency of the tree decomposition.} 
\begin{enumerate}
\item The size of each $B \in \cB(v)$ is at most $3k+3$.  \label{cond:1}
\item The set $F_d(v)$ contains $v$. \label{cond:1,2}
\item For each $j \in \{2, \dots, d\}$,  the set $F_{j}(v)$ equals  $B_{j}(v) \setminus B_{j-1}(v)$.  \label{cond:2}
\item For each $w \in V(G)$ and $j_1,j_2  \in \{1, \dots, d\}$ with $j_1 < j_2$, if $w \in B_{j_1} \cap B_{j_2}$, then $w \in B_{i_j}$  for every $j\in \{j_1+1, \dots, j_2-1\}$.  \label{cond:3}

\item For each $j_1, j_2 \in \{1, \dots, d\}$, each pair of vertices $u_1, u_2 \in B_{j_1}(v)\cap B_{j_2}(v)$ satisfies that $\{u_1,u_2\} \in E_{j_1}(v) \iff \{u_1, u_2\} \in E_{j_2}(v)$. \label{cond:4}
\item For each $u \in B_d(v)$, $v$ checks that $\{u,v\} \in E \iff \{u,v\}  \in E_d(v)$. \label{cond:5}
\item For each $u \in N(v)$ such that $d(u)\geq d(v)$, $v$ checks that $m(v)$ is a $d(v)$-suffix of $m(u)$.  \label{cond:6}
\item For each $u \in N(v)$ such that $d(u) \leq d(v)$, $v$ checks that  $u \in B_{d}(v)$.  \label{cond:7}
\item $v$ checks that it is an auxiliary vertex for $B_d(v)$ and that it has a neighbor that is also an auxiliary vertex for $ B_d(v)$.  \label{cond:8}
\item For each vertex $w\in N(v) \cup \{v\}$ such that $w$ is an auxiliary tree vertex for $B_d(v)$, $v$ checks that $d_{aux}(w, B_d(w)) = d$ and $F_i(w) = F_d(v)$. \label{cond:9}
\end{enumerate}

\paragraph{Consistency of the auxiliary trees and the exit vertex.}  The following conditions are used to verify that the nodes marked as auxiliary vertices for node $i$ form an auxiliary subtree $S(i)$ rooted at $\ell_i$ and spanning $F_i$. At the same time, we check that all de nodes in $S(i)$ have the same auxiliary information, corresponding to the depth $d_i$ of bag $i$,  the contents of $F_i$, the identity of exit vertex $\ell_i$, and the identity of the node of $F_{p(i)}$ responsable of $i$, and the same $d_i$-suffix of the main messages. 

For each $i \in Aux(v)$, vertex $v$ checks te following conditions
\begin{enumerate}
\setcounter{enumi}{10}
\item For each vertex $w\in N(v)$ such that $w$ is an auxiliary tree vertex for $i$, $v$ checks that $$(d_{aux}(w,i), \ell_i(w), \alpha_i(w), F_i(w)) =  (d_{aux}(v,i), \ell_i(v),  \label{cond:10}\alpha_i(v), F_i(v)) $$
\item $d_{aux}(v,i) \leq d(v)$.\label{cond:11}
\item  Uses $TreeCert(F_i(v), v)$ to verify that there is an auxiliary tree $S(i)$ rooted in $\ell_i(v)$ and spanning $F_i(v)$. More precisely, $v$ checks the following conditions:\label{cond:12}
\begin{enumerate}
\item If $v\neq \ell_i(v)$ then $v$ has a neighbor with the label $parent(w)$ which is also an auxiliary vertex for $i$;
\item If $v\neq \ell_i(v)$, then $dist(parent(v)) = dist(v) -1$;
\item If $v = \ell_i$ then $dist(v) = 0$, $sub(v) = F_i(v)$,  $v$ is adjacent to $\alpha_i(v)$ and $d(\alpha_i(v)) = d_{aux}(v,i)-1$.
\item Set $sub(v)$ is  the union of all sets $sub(w)$ over the children $w$ of $v$ in $S(i)$ (i.e., for all $w$ such that $parent(w) = v$), plus vertex $v$ itself if $v \in F_i$.
%
%
%
%
\end{enumerate}
\end{enumerate}

\paragraph{Soundness and completeness.} We now analyze the correctness of the protocol. The completeness follows directly by Lemmas \ref{le:logco}, \ref{le:leaders} and \ref{le:SteinerCongestion}. In the following, we prove the soundness. 

\paragraph{Soundness:} Let us assume that all vertices accept a given certificate in the verification round. We now show that necessarily $\tw(G)\leq 3k+2$. For each node $v\in V$, let us call $B(v)$ and $F(v)$ the set $F_{d(v)}(v)$ and $B_{d(v)}(v)$, respectively. We say that a vertex $v$ is in depth $d$ if $d(v) = d$. The proof of the soundness is a consequence of the following claims. 

\medskip

\noindent{\bf Claim 1:}  For each $i\in Aux(v)$,  there is a tree $S(i)$ rooted in $\ell_i(v)$ spanning~$F_i(v)$.  Moreover, all the vertices in $S(i)$ are in a depth greater or equal than $d_{aux}(v,i)$, and their main messages have the same $d_{aux}(v,i)$-suffix.

\begin{proof}[Proof of Claim 1] 

 First, observe that by the verification of condition {\bf \ref{cond:12} (a)-(c)}, we have that $S(i)$ is defined by the set of all auxiliary vertices for $i$ and the edges $\{w, parent(w)\}$. Since $S(i)$ is connected, by conditions {\bf \ref{cond:9}} and {\bf \ref{cond:10}}, all auxiliary vertices for node $i$ agree in the same $F_i = F_i(v)$ and in the depth of $i$ given by $d_{aux} =d_{aux}(v,i)$. By condition {\bf \ref{cond:12} (c)-(d)}, all vertices in $F_i$ exist and are auxiliary vertices for node $i$. Finally, by condition  {\bf \ref{cond:11}} all nodes are in a depth greater or equal than $d_{aux}$ and by condition {\bf \ref{cond:6}}, the main messages of all vertices in $S(i)$ have the same $d_{aux}$-suffix.
\end{proof}

\noindent{\bf Claim 2:} For every vertex $v$, all nodes in $F(v)$ receive the same main messages as $v$.

\begin{proof}[Proof of Claim 2] Let $u$ be a vertex in $F(v)$. If $u$ and $v$ are adjacent the claim is true by condition {\bf \ref{cond:6}}.  Suppose then that $u\notin N(v)$. Since $v$ verifies condition {\bf \ref{cond:8}}, there is a set of auxiliary vertices for node $i = B(v)$. By {\bf Claim 1}, $m(v)$ is a $d(v)$-suffix of $m(w)$, for every auxiliary vertex $w$ for node $i$. Since all vertices in $F(v)$ are auxiliary vertices for $i$, we deduce that $u$ has the same main messages than~$v$.  
\end{proof}

\noindent{\bf Claim 3:} For every pair of vertices $u, v\in V$ either $F(v) = F(u)$ or $F(v)\cap F(u) = \emptyset$. 

\begin{proof}[Proof of Claim 3]
This is a direct corollary of {\bf Claim 2}. Indeed, let us suppose that there exist a pair $u,v \in V$  such that $F(v) \neq F(u)$ but $F(v) \cap F(u) \neq \emptyset$. Then, without loss of generality, there is a node $w\in F(v)\cap F(u)$ such that $F(w) \neq F(v)$, which contradicts {\bf Claim~2}.
\end{proof}

\noindent{\bf Claim 4:} For every vertex $v$ such that $d(v)>1$, there exist a node $u$ such that $m(u)$ is a $(d(v)-1)$-suffix of $m(v)$.

\begin{proof}[Proof of Claim 4] 
Let $d=d(v)$. {\bf Claim 1} implies that the exit vertex $\ell_i$ for $i = B_d(v)$ exists and is the root of $S(i)$, which is in a depth greater or equal than $d_{aux} = d$. Condition {\bf \ref{cond:12} (c)} implies that $\ell_i$ is adjacent to a node $\alpha_i$ of depth $d - 1$. Then, by condition {\bf \ref{cond:6}}, $m(\alpha_i)$ is a $d-1$-suffix of $m(\ell_i)$. Since $m(v)$ is a $d$-suffix of $m(\ell_i)$, we deduce that $m(\alpha_i)$ is a $d-1$-suffix of $m(v)$.
\end{proof}

\medskip

\noindent{\bf Claim 5}: For every $u,v \in V$, the sets $F(u) \neq F(v)$ if and only if $B(u) \neq B(v)$. 
\begin{proof}[Proof of Claim 5]
First, observe that if $F(u) = F(v)$, then by condition {\bf \ref{cond:1,2}} and {\bf Claim 2}, $B(v) = B(u)$. For the reciprocal, let us suppose by contradiction that there exist $u,v\in V$ such that $F(u) \neq F(v)$ and $B(u) = B(v)$. Let us call $d_1 = d(u)$ and $d_2 = d(v)$. 
Since $F(u) \neq F(v)$, necessarily $B_{d_1-1}(u) \neq B_{d_2-1}(v)$. Let us assume, without loss of generality, that there exists a vertex $w \in F(v) \setminus F(u)$. 
Since $w$ belongs to $F(v)$, we have that $F(w) = F(v)$ by {\bf Claim 2}, and $w$ does not belong to $B_{d_1-1}(v)$. Since $w \notin F(u)$ we have that $w$ belongs to $B_{d_2-1}(u)$.
 Let us call $d_3$ the maximum in $\{1, \dots, d_1-1\}$ such that $B_{d_3}(u)$ belongs to $\cB(v)$. Observe that $d_3$ exists, because applying condition {\bf \ref{cond:6}} on all the vertices in $G$ we deduce that $B_1(u) = B_1(v)$. If $B_{d_3}(u)$ contains $w$, then $v$ fails to verify condition {\bf \ref{cond:3}}. If $B_{d_3}(u)$ does not contain vertex $w$, there exists a $d_4 \in \{d_1, \dots, d_3-1\}$ such that $w \in F_{d_4}(u) =  B_{d_4}(u) \setminus B_{d_4-1}(u)$. Then, {\bf Claim 4} applied to the vertices in the sequence $F_{d_1}(u), F_{d_1-1}(u), ..., F_{d_4}(u)$ implies that there is a node $w'$ such that $F(w') = F_{d_4}(u)$. Then, by {\bf Claim 2}, $F(w) = F_{d_4}(u)$. We deduce that $B(v) = B_{d_4}(u)$, which is a contradiction with the choice of $d_3$.
 \end{proof}

Let us define $I$ as the set of indexes  $i \in [n^{\cO(k)}]$ for which there is a $v \in V(G)$ such that $i$ is the binary representation of $B(v)$. By {\bf Claim 2}, {\bf 3} and {\bf 5}, we have a partition $\{F_i\}_{i \in I}$ of $V(G)$, such that, for each $i \in I$, all nodes in $F_i$ receive the same main messages. In particular, for every vertex $v$ in $F_i$, we have that $i$ is the binary representation of $B(v)$. For each $v\in F_i$, we define $p(i)$ as the binary representation of $B_{d(v)-1}(v)$ ($p(i)= \bot$ if $v \in B_1(v)$). From {\bf Claim~4} we know that the binary representation of $B_{d(v)-1}(v)$ is also in $I$. In other words, the nodes in $F_{d(v)-1}(v)$ have certificates that are consistent with the certificate of $v$. In particular, all vertices of $G$ agree on the contents of the root node, that we call $B_1$. We then define the pair $(T, \{B_i\}_{i \in I})$, where $T$ is defined by the tree with vertex set $I$ and edge set $\{i, p(i)\}$, for each $i\in I$ different than the root. \\

\medskip

\noindent{\bf Claim 6:} The pair $(T, \{B_i\}_{i \in I})$ forms a tree decomposition of $G$ of width $3k+2$.

\begin{proof}[Proof of Claim 6]
According to Definition \ref{de:treedec} we have to check that the following three properties are satisfied: 
\begin{itemize}
\item For every $v \in V$, there exists $i\in I$ such that $v \in B_i$;
\item For every $e = \{u,v\} \in E$ there is $i\in I$ such that $\{u, v\} \subseteq B_i$;
\item For every $v \in V$, the set $\{i \in I : v \in B_i\}$ forms a connected subgraph of $T$. 
\end{itemize}
The first two properties are directly verified as every vertex is given one bag that contains it in the main message. The second property is verified by condition {\bf \ref{cond:7}}. Finally, for the third condition, let us suppose that there exists a vertex $v\in V$ such that $I_v = \{i \in I: v \in B_i\}$ is not connected. Let $C_1$ and $C_2$ be two different components of $I_v$, and let $i_1$ and $i_2$ be, respectively, the nodes in $C_1$ and $C_2$ of minimum depth.   Observe that $F_{i_1} \neq F_{i_2}$ and by condition {\bf \ref{cond:2}}, $v$ must be contained in $F_{i_1} \cap F_{i_2}$, which contradicts {\bf Claim 2}. We deduce that for every $v \in V$, the set $\{i \in I : v \in B_i\}$ forms a connected subgraph of $T$. 
We conclude that $(T, \{B_i\}_{i \in I})$ forms a tree decomposition of $G$. Finally, the width of the decomposition is verified by condition {\bf 1}.
\end{proof}

We finish this section showing one more property of our verification algorithm, that is not required for the certification of the $3$-approximation of the treewidth, but will be useful in the next section. \\

\noindent {\bf Claim 7:} For every $v\in V$ and every $j \in \{1, \dots, d(v)\}$, the set $E_j(v)$ corresponds to the edges of graph induced by $B_j(v)$.

\begin{proof}[Proof of Claim 7]

We prove this claim by induction on $d(v)$. Suppose first that $d(v)=1$. Since $ F_1 = B_1$, we have that $F(u) = F(v)$ for every other vertex $u$ in $B_1$. By {\bf Claim 2} we obtain that $v$ and $u$ agree on the same set $E_1$. Then, by condition {\bf \ref{cond:4}} on all the vertices in $B_1$, we deduce that $E_1 = E[G_1]$. 
Now suppose that the claim is true for every vertex of depth smaller than $d>1$ and suppose that $d(v) = d$. By the induction hypothesis, for every $j \in \{1, \dots, d-1\}$ the set $E_{j}(v)$ corresponds to the set of edges of $G[B_{j}(v)]$. Then, it remains to prove that $E_d(v)$ corresponds to the set of edges of $G[B_d(v)]$.
Let $w_1, w_2$ be an arbitrary pair of vertices in $B(v)$, and call $d_1$ and $d_2$ the depth of $w_1$ and $w_2$, respectively. Without loss of generality assume that $d_1 \leq d_2$. By {\bf Claim 4} applied to all vertices in the path of nodes between $B_{d}(v)$ and $B_{d_2}(w_2)$, we have that $E_{d_2}(w_2) = E_{d_2}(v)$. By condition {\bf \ref{cond:5}}, we have that $w_1, w_2$ are adjacent if and only if $\{w_1, w_2\}$ belongs to $E_{d_2}(w_2)$. Suppose that $d_2 = d$. By {\bf Claim 2}, we know that all nodes in $F(v)$ have the same main messages, in particular, they agree in the set $E_{d}(v)$. Then $E_{d_2}(v) = E_d(v)$.  If $d_2 < d$, we have by condition {\bf \ref{cond:4}}, that  $w_1, w_2 \in E_{d_2}(v)$ if and only if $\{w_1, w_2\}$ belongs to $E_{d}(v)$.  In both cases we deduce that $\{w_1, w_2\} \in E$ if and only if $\{w_1, w_2\} \in E_d(v)$. 
\end{proof}

%
%

\section{Certifying regular properties}\label{se:regular}

In this section, we prove our main result, Theorem~\ref{theo:main-informal}. 

\begin{theorem}\label{theo:main-formal}
For every $k\geq 1$ and any regular graph property  $\cP(G)$, there exists a distributed certification protocol certifying that $\tw(G) \leq k$ and $\cP(G)$ is true, using certificates on $O(\log^2n)$ bits in $n$-node networks. 
\end{theorem}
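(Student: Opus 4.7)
My plan is to overlay a distributed execution of Courcelle-style dynamic programming on top of the certified tree decomposition produced by Lemma~\ref{lem:3approx}. The first step is to run Lemma~\ref{lem:3approx} for the given parameter~$k$, which at cost $O(k^2\log^2 n)$ bits already certifies a coherent, logarithmic-depth tree decomposition of width at most $3k+2$, together with the internal consistency guarantees of Claims 1--7 of Section~\ref{se:3approx}. Since ``$\tw(G) \leq k$'' is MSO, hence regular by Proposition~\ref{pr:reg}, and a conjunction of two regular properties is again regular (take the product of their class sets), the property $\cP'(G) := \cP(G) \wedge (\tw(G)\leq k)$ is regular. Fix its apparatus $(\cC, h, \{\odot_f\})$ at width $3k+3$; here $|\cC|=O(1)$ depends only on $k$ and $\cP$, so every class fits in $O(1)$ bits. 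Checking that the root class is accepting then amounts to certifying~$\cP'$.

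The second step is to extend the certificate with the DP state. At each vertex $v$ at depth $d$ I append the sequence $h(G_{i_1(v)}), \dots, h(G_{i_d(v)})$ of homomorphism classes of the $(3k+3)$-terminal recursive subgraphs along $v$'s ancestor path from the root to $v$'s bag ($O(\log n)$ entries of constant size). For every node $i$ and every vertex $w$ of the auxiliary tree $S(i)$, I also write down a partial-aggregation class $pa(w,i) \in \cC$ equal to $\bigodot_j h(G_j^+)$ taken over those children $j$ of $i$ whose in-charge vertex $\alpha_j \in F_i$ lies in the $S(i)$-subtree rooted at~$w$ (including $w$). By Lemma~\ref{le:SteinerCongestion}, each vertex sits in $O(\log n)$ such trees, so these values add another $O(\log n)$ bits. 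The total additional certificate is $O(\log n) = O(\log^2 n)$ bits.

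The verifier reuses the single communication round of the treewidth protocol and performs local checks. First, the class-sequence suffixes are required to agree on every edge along which the main-message suffixes already agree, which by (the extension of) Claim~2 forces the claimed $h(G_i)$ to be identical across all of $F_i$. Second, every $v \in F_i$ recomputes $h(G_i^b)$ locally from $(B_i,E_i)$, which is correct by Claim~7. Third, every $w\in S(i)$ checks
\begin{equation*}
pa(w,i) \;=\; \Bigl(\bigodot_{w' :\, parent(w')=w} pa(w',i)\Bigr) \,\odot\, \Bigl(\bigodot_{j :\, \alpha_j=w} \bigl(h(G_j) \odot_{f'_j} h(G_i^b)\bigr)\Bigr),
\end{equation*}
reading $h(G_j)$ from the top of the class sequence of the adjacent exit vertex $\ell_j$ (identified via the auxiliary certificates of the treewidth protocol) and using that the glueing operation $f'_j$ is determined by $B_j \cap B_i$, which $w$ knows. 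Fourth, at the root $\ell_i$ of $S(i)$, check that the class claimed for $G_i$ by $\ell_i$'s own sequence satisfies $h(G_i) = h(G_i^b) \odot pa(\ell_i,i)$, with $h(G_i^b)$ obtained from any neighbor in $F_i$. Finally, every depth-$1$ vertex checks that $h(G_r)$ is accepting for~$\cP'$. Completeness is immediate from Propositions~\ref{pr:gramtw} and~\ref{pr:reg}. Soundness follows by induction on the height of $i$ in the decomposition: Claims 1--7 make the extracted decomposition genuine, and the $pa$-chain along each $S(i)$ together with the identity at $\ell_i$ forces the certified $h(G_i)$ to equal the true homomorphism class, so the check at the root yields $\cP'(G)$.

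The main obstacle in this plan is that a decomposition node $i$ can have unboundedly many children while its bag $F_i$ is small, so the composition $h(G_i) = h(G_i^b) \odot \bigodot_j h(G_j^+)$ cannot be verified at any single vertex in a one-round protocol. The remedy is precisely to distribute this composition over $F_i$ using the auxiliary tree $S(i)$ and the in-charge assignment of Lemma~\ref{le:leaders}: each vertex merges only a constant-size piece of state for the children it is adjacent to, plus the partial aggregations of its $S(i)$-children. The congestion bound of Lemma~\ref{le:SteinerCongestion} then guarantees that this bookkeeping adds only $O(\log n)$ bits per vertex, keeping the overall certificate size at $O(\log^2 n)$.
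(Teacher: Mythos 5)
Your proposal is correct and follows essentially the same route as the paper: certify the coherent logarithmic-depth decomposition via Lemma~\ref{lem:3approx}, use regularity of $\cP\wedge(\tw\le k)$, and verify the bottom-up computation of homomorphism classes by passing $h(G_j)$ through the exit vertices $\ell_j$ to their in-charge vertices in $F_i$ and enforcing consistency along the auxiliary trees $S(i)$, accepting iff the root class is accepting. The only (harmless) deviation is bookkeeping: where the paper partitions the children of $i$ among the vertices $w\in F_i$ via $Children(w)$ and glues the constantly many classes $h(G_i[w])$ on $B_i$, you aggregate the children's classes by a partial-composition value $pa(\cdot,i)$ propagated along $S(i)$ --- an equally valid, and arguably more explicit, way to distribute the unbounded-arity composition within the $O(\log^2 n)$ budget.
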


For simplicity, we integrate the condition $\tw(G) \leq k$ to property $\cP$, by setting $\cP(G) = (\tw(G)\leq k) \wedge \cP(G)$. The new property is regular because property  $\tw(G)\leq k$ is regular (see, e.g.,~\cite{LMT19} for a discussion), and a conjunction of regular properties is regular by~\cite{BoPaTo92}. 
Basically, we enrich the protocol of Section~\ref{se:3approx} as follows. Either the protocol rejects because $\tw(G) >k$, or it constructs and certifies a tree decomposition at most $3k+2$. In the latter case, we also certify property $\cP$ using the tree decomposition of width $3k+2$ and the homomorphism classes $\cC$ of the property on $(3k+3)$-terminal graphs.


Fix the tree decomposition of width $3k+2$. As in the sketch of proof of Proposition~\ref{pr:gramtw}, for each node $i$ of the decomposition tree, $G_i$ denotes the $(3k+3)$-terminal graph corresponding to $G[V_i]$, with set of terminals $B_i$. Also, for each $w \in F_i$, let $Children(w)$ denote the set of children $j$ of $i$ such that $w$ is in charge of node $j$ (see Lemma~\ref{le:leaders} applied to $j$). In particular, the sets $Children(w)$ for $w \in F_i$ form a partition of the children nodes of $i$ in the decomposition tree. Denote by $G_i[w]$ the $(3k+3)$-terminal graph obtained from $G[B_i \cup \bigcup_{j \in Children(w)} V_j]$ by choosing $B_i$ as set of terminals. Note that if $Children(w)$ is empty, then $G_i[w]$ is simply the $3k+3$-terminal base graph $G^b_i$ corresponding to $G[B_i]$, as illustrated in Figure~\ref{fig:incharge}. 

\begin{figure}[htbp]
\begin{center}
\scalebox{0.85}{
\input{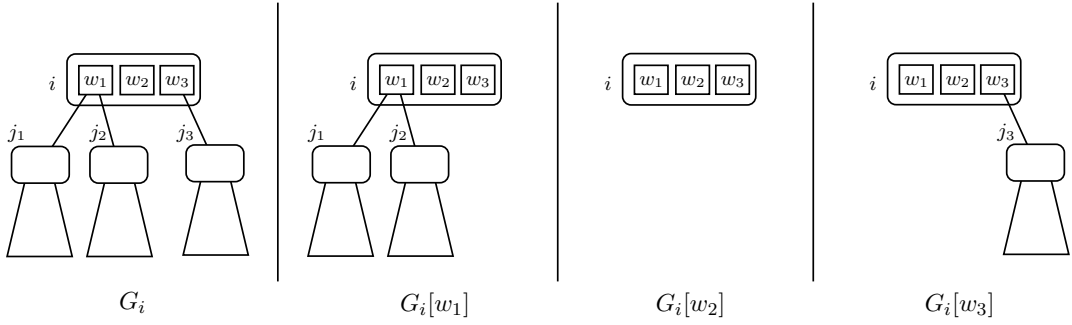}
}
\caption{Graphs $G_i$ and $G_i[w]$.}
\label{fig:incharge}
\end{center}
\end{figure}


The \textbf{prover} appends two new informations to the previous main messages of each vertex $v \in F_i$: the homomorphism class of $G_i$ as well as the homomorphism class of $G_i[v]$. Moreover the homomorphism class of $G_i$ is also added to the auxiliary message $m_{aux}(w,i)$ for every vertex $w$ of the auxiliary tree $S(i)$.
Note that this only ads a constant size to the previous main messages, since property $\cP$ has a constant number of homomorphism classes. Auxiliary messages are increased by $O(\log n)$ bits, since each vertex $w$ is in $O(\log n)$ auxiliary trees $S(i)$ by Lemma~\ref{le:SteinerCongestion}. Nevertheless, the constants here depend on $k$ and on property $\cP$.

We now update the \textbf{verification round} to exploit these new messages and check the property. As before, we use the auxiliary tree $S(i)$ to ensure that $\ell_i$, and all vertices $v \in F_i$, have received from the prover the same isomorphism class for $G_i$.

It remains to check the consistency of the homomorphism classes for property $\cP$ in the respective subgraphs. 

\begin{itemize} 
\item \textit{Consistency of the homomorphism class of $G_i[v]$.} Firstly, each vertex $v \in F_i$ in charge of some nodes must certify the homomorphism class of $G_i[v]$, in the sense that it compares the message received from the prover with the homomorphism class that he constructs from the nodes $j \in Children(v)$. Vertex $v$ receives, for each $j \in Children[v]$, a message from $\ell_j$ with the homomorphism class of $\cP$ restricted to the $(3k+3)$-terminal graph $G_j$. Using Definition~\ref{de:reg}, it constructs the homomorphism class on $G^+_j$. Recall that $G^+_j = f(G_j,G^b_i)$, i.e., $G^+_j$ is obtained by glueing $G_j$ and the base graphs $G^b_i$ induced by $B_i$, the glueing being performed by identifying the terminals of $B_j \setminus B_i$ in $G_j$ to the corresponding vertices of $B_i$. Vertex $v$ knows both sets $B_i$ (which is in its initial message) and $B_j$ (received from $\ell_j$), so it has full knowledge of matrix $m(f)$ of the composition operation $f$. (There is a hidden technicality here. Node $\ell_j$ sends its main message to $v$ in the unique communication round, and this message contains all bags $\cB(\ell_j)$, in particular bag $B_j$. Node $v$ can retrieve this bag, since its order in the list $\cB(\ell_j)$, starting from the end of the list, is exactly the depth $i(v)$ of node $i$, plus one.) Then the homomorphism class of $h(G^+_j)$ is obtained as $\odot_f(h(G_j),h(G^b_i))$(see Figure~\ref{fig:incharge}, Proposition~\ref{pr:gramtw} and its sketch of proof). Again $v$ knows graph $G[B_i]$ hence it can compute its homomorphism class $h(G^b_i)$. It also knows $h(G^+_j)$ from $\ell_j$, altogether $v$ is able to compute the homomorphism class $h(G^+_j)$. Eventually, since $G[v]$ is obtained by glueing on $B_i$ all graphs $G^+_j, j \in Children(v)$, $v$ computes the homomorphism class of $G_i[v]$. If this class is not the same as the one received from the prover, vertex $v$ rejects. 

\item  \textit{Consistency of the homomorphism class of $G_i$.} Every vertex $v \in F_i$ checks the consistency between the message  received from the prover as class of $\cP$ on $G_i$, 
and the one it constructs from the glueing of all classes of $G_i[w]$ (that vertex $w$ has received from the prover), for all $w \in F_i$, on $B_i$. Indeed, $G_i$ is equal to the glueing, on $B_i$, of all graphs $G_i[w]$ with $w \in F_i$. Again, in case of inconsistency, vertex $v$ it rejects. 

\item \textit{Yes-instance.} Every vertex belonging to $F_r$ (the root node of the decomposition tree) accepts if the class of property $\cP$ on $G_r$ is an accepting one, otherwise it rejects. 
\end{itemize}

\paragraph{Soundness and Completeness.} For the completeness part, assume that our graph $G$ has treewidth at most $k$ and satisfies property $\cP$. By Lemma~\ref{lem:technique1-informal}, the prover can construct the messages for the 3-approximation of treewidth, such that the verifier passes all the tests certifying the tree decomposition. Moreover the tree decomposition is correct, and so are, for each node $i$ of the decomposition, the exit vertex $\ell_i$ of $i$ and the vertex of $F_{p(i)}$ in charge of node $i$. It remains to prove that vertices $v \in F_i$ accept. The proof is done bottom-up, by considering $i$ from the leaves to the root. If $i$ is a leaf of the decomposition tree, then $v$ is not in charge of any other node (i.e., $Children(v)$ is empty). In this case $G_i = G_i[v]$, and the homomorphism classes are all equal and correspond to the $(3k+3)$-terminal base graph $G[B_i]$, and all vertices $v \in F_i$ accept. Now if $i$ is not the root, every $v \in F_i$ is assigned a (possibly empty) set $Children(v)$ of children of $i$ in the decomposition tree. For each $j \in Children(v)$, vertex $v$ receives from $\ell_j$ the homomorphism class of $G_j$, so $v$ computes the class of $G^+_j$. By Proposition~\ref{pr:gramtw} and Definition~\ref{de:reg}, the homomorphism class of $G_i[v]$ is consistent withe the one obtained with the glueing of all $G^+_j$ on the set $B_i$ of terminals. Eventually, by glueing on $B_i$ all graphs $G_i[v]$, for all $v \in F_i$, we obtain $G_i$, and the homomorphism classes of $G_i$ and $G_i[v]$ are consistant, so $v$ accepts. At the root node $i=r$, each $v \in F_r$ also checks that the homomorphism class of $\cP$ is an accepting one (and it is), so $v$ accepts.

For the soundness, assume that all nodes accept. We must show by induction, for each node $i$ of the decomposition from leaves to the root, that the messages that each $v \in F_i$ received as homomorphism class of $\cP$ on graphs $G_i[v]$ and $G_i$ are correct. We rely again on the fact that the tree decomposition is correct, as well as the exit nodes and their neighbours in the parent node. When $i$ is a leaf node, each $v$ knows that its set $Children(v)$ is empty, since it has received no message from some exit node. Also $v$ knows the graph $G[B_i]$ (recall that all edges of $G[B_i]$ have been sent in the main messages). Therefore it checks that the homomorphism classes received from the prover for $G_i[v]$ and $G_i$ received are correct: they must be equal, and must correspond to the base graph $G[B_i]$. If $i$ is not a leaf node, we rely on the fact that, collecting the messages from the exit nodes $\ell_j$, vertex $v \in F_i$ correctly constructs $Children(v)$. For each $j \in Children(v)$, $v$ has received from $\ell_j$ the homomorphism class $c$ of $G_j$ (which is correct by induction hypothesis). Therefore $v$ correctly constructs the class of $G^+_j$ from $c$ and the class of the $(3k+3)$-terminal base graph $G[B_i]$. Then, by glueing all $G^+_j$, $j \in Children(v)$, $v$ it gets the class of $G_i[v]$. Since at this stage $v$ has not rejected, the class of $G_i[v]$ received from the prover is correct. Eventually, $v$ constructs the homomorphism class of $G_i$ by glueing the classes of all $G_i[w], w \in F_i$. Since $v$ knows $F_i$ and $B_i$, it correctly performs the glueing. By the fact that $v$ has not rejected up to now, we deduce that the homomorphism class of $G_i$ obtained from the prover is correct.

Since vertices $v$ of the root bag accept, il means that the homomorphism class of $\cP$ on the whole graph is an accepting one, so the property holds, which completes the proof of Theorem~\ref{theo:main-formal}.

%
%

\section{Conclusion}\label{sec:conclusion}

To sum up, we proved that for every $k\geq 1$ and every MSO property on graphs, there exists a distributed protocol certifying that the input graph is of treewidth at most $k$ and satisfies the required property, using certificates on $O(\log^2n)$ bits. The result extends to optimisation problems, where we certify that a given vertex subset is of optimal weight (e.g., of maximum or of minimum size) for some MSO property, and the treewidth of the input graph is at most $k$.

The first natural question is whether we can reduce the size of certificate to $O(\log n)$ instead of $O(\log^2 n)$. We believe that such an improvement requires considerably different techniques, even for certifying that the treewidth of the input graph is at most $k$.

Another further research direction concerns certification versions for other algorithmic ``meta-theorems''. For example, given a graph property expressible by a first-order boolean formula, is there a distributed protocol certifying that the input graph is planar and satisfies the property, using certificates of logarithmic size?

\paragraph{Acknowledgment.} The authors are thankful to Eric Remila for fruitful discussions on certification schemes related to the one considered in this paper.

\bibliographystyle{plain}
\bibliography{Biblio}

%
%
\appendix

\section{More preliminaries: MSO and regular properties for optimization}\label{app:propt}


Let us enrich our framework to properties on graphs and vertex subsets, i.e., properties $\cP(G,X)$ assigning to each graph $G$ and each vertex subset $X$ of $G$ a boolean value. Properties like "$X$ is an independent set of $G$" or "$X$ is a dominating set of $G$" are expressible in (Counting) Monadic Second Order Logic, and they are still regular as we shall see below. More importantly, in the sequential setting this allows to solve efficiently optimisation problems on graphs of bounded treewidth, namely to compute a vertex subset $X$ of maximum (or minimum) size such that  $\cP(G,X)$ holds.

Composition operations on $w$-terminal recursive graphs naturally extend to pairs $(G,X)$, where $G$ is a $w$-terminal recursive graph and $X$ is a vertex subset. Let $f$ be a composition operation of arity 1, and $G = f(G_1)$. Then for every vertex subset $X_1$ of $G_1$, we take $f(G_1,X_1) = (G, X_1)$. Consider composition operation of arity 2 such that $G = f(G_1, G_2)$. When we perform this composition on pairs $(G_1,X_1)$, $(G_2, X_2)$, the result is the pair $(G,X)$, where $X$ is obtained by the the glueing of $X_1$ and $X_2$. Therefore the intersections of sets $X_1$ and $X_2$ with the terminals of $G_1$ and respectively $G_2$ must be coherent with the gluing, in the sense that if two terminals $x_1$ of $G_1$ and $x_2$ of $G_2$ are identified in $G$, then we either have $x_1 \in X_1$ and $x_2 \in X_2$, or we have $x_1 \notin X_1$ and $x_2 \notin X_2$ (see~\cite{BoPaTo92,FoToVi15} for more details). To be complete, we restate the notion of regularity to properties $\cP(G,X)$ -- the only difference being that the property and the homomorphism classes now depend on both parameters, the graph and the vertex subset.

\begin{definition}[regular property on graphs and vertex sets]\label{de:regX}
Graph property $\cP(G,X)$ is called \emph{regular} if, for any value $w$, we can associate a finite set $\cC$ of \emph{homomorphism classes} and a \emph{homomorphism function} $h$, assigning to each $w$-terminal recursive graph $G$ and to each vertex subset $X$ a class $h(G,X) \in \cC$ such that:
\begin{enumerate}
\item If $h(G_1,X_1) = h(G_2,X_2)$ then $\cP(G_1,X_1) = \cP(G_2,X_2)$.
\item For each composition operation $f$ of arity 2 there exists a function $\odot_f: \cC \times \cC \rightarrow \cC$ such that, for any two pairs $(G_1,X_1)$ and $(G_2,X_2)$,
$$h(f((G_1,X_1),(G_2,X_2)) = \odot_f(h(G_1,X_1),h(G_2,X_2))$$
and for each composition operation $f$ of arity 1 there is a function $\odot_f: \cC \rightarrow \cC$ such that, for any pair $(G,X)$,
$$h(f(G,X)) = \odot_f(h(G,X)).$$
\end{enumerate}
\end{definition}
The first condition separates the classes into \emph{accepting} ones (i.e., classes $c \in \cC$ such that $h(G,X)=c$ implies that $\cP(G,X)$ is true) and \emph{rejecting} ones (s.t. $h(G,X)=c$ implies that $\cP(G,X)$ is false).

We also have:

\begin{proposition}[\cite{BoPaTo92,Courcelle90}]\label{pr:regX}
Any property $\cP(G,X)$ expressible by a $MSO$ formula is regular.

Moreover, given the $MSO$ formula $\varphi$ and parameter $w$, one can explicitely compute the set of classes, the homomorphism function for all $w$-terminal base graphs as well as the homomorphism functions $\odot_f$  over all possible composition operations $f$. 
\end{proposition}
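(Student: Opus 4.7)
The plan is to bootstrap on the certification of a coherent tree decomposition of width at most $3k+2$ guaranteed by Lemma~\ref{lem:3approx}, and to superimpose on it a distributed simulation of the bottom-up Courcelle-style dynamic programming justified by Definition~\ref{de:reg}. Once $\tw(G)\leq k$, the decomposition is available together with, for every node $i$, the associated $(3k+3)$-terminal graph $G_i$ with set of terminals $B_i$; property $\cP$ being regular, there is a finite set $\cC$ of classes (depending only on $k$ and $\cP$) and composition functions $\odot_f$ describing how classes propagate through the grammar of $(3k+3)$-terminal recursive graphs of Proposition~\ref{pr:gramtw}. The intended scheme is to let the prover provide the correct class at every level of the decomposition, and to let the verifier locally recompute each class from its immediate children via $\odot_f$, accepting only if the class of $G_r$, where $r$ is the root, belongs to the accepting subset of $\cC$.

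Concretely, I append to the main message of every $v\in F_i$ two constant-size labels: $h(G_i)$, and $h(G_i[v])$, where $G_i[v]$ is the terminal graph obtained by combining $G[B_i]$ with the subgraphs rooted at those children $j$ of $i$ for which $v$ is the vertex of $F_i$ in charge of $j$ (see Lemma~\ref{le:leaders}). I also copy $h(G_i)$ together with the collection of pairs $(w',h(G_i[w']))_{w'\in F_i}$ into the auxiliary message $m_{aux}(w,i)$ carried by every vertex $w$ of the Steiner tree $S(i)$ of Lemma~\ref{le:SteinerCongestion}. Since $|F_i|\leq 3k+3$ and $|\cC|$ are constants in $k$ and $\cP$, this adds only $\cO(1)$ bits to each main message and $\cO(\log n)$ bits to each auxiliary message, preserving the overall $\cO(\log^2 n)$ certificate size. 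The verification round then adds three local checks. First, the Steiner-tree machinery of Section~\ref{se:3approx} is reused on each $S(i)$ to force all vertices of $S(i)$ to agree on $h(G_i)$ and on the list $(h(G_i[w]))_{w\in F_i}$. Second, each $v\in F_i$ in charge of some children reads from each neighbour $\ell_j$ the class $h(G_j)$, combines it with $h(G[B_i])$ via the composition operation realising $G_j^+=f(G_j,G_i^b)$, aggregates over all children it is in charge of, and rejects if the result disagrees with the $h(G_i[v])$ received from the prover. Third, $v$ glues all $h(G_i[w])$ for $w\in F_i$ on the terminal set $B_i$ and rejects if the outcome differs from $h(G_i)$; at the root, every $v\in F_r$ additionally rejects unless $h(G_r)$ is accepting.

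Completeness is then immediate by having the prover distribute the actual classes induced by $G$. Soundness proceeds by induction on the decomposition tree from leaves to root: since the tree decomposition, the partition $\{F_i\}$, and the exit vertices $\ell_i$ are themselves correctly certified by Lemma~\ref{lem:3approx}, the inductive hypothesis at the children of $i$ gives a correct $h(G_j)$, and the two gluing checks force correctness of $h(G_i[v])$ for each $v\in F_i$, and then of $h(G_i)$; acceptance at the root forces $h(G_r)$ to be accepting, so by the first item of Definition~\ref{de:reg} property $\cP(G)$ holds. The main obstacle I anticipate is that a bag $F_i$ need not be a clique of $G$, so two vertices of $F_i$ cannot exchange their class labels directly in the single allowed round; I plan to bypass this by piggybacking the required labels onto the auxiliary messages that already travel along the Steiner tree $S(i)$, and by relying on the already-certified structure of $S(i)$ from Section~\ref{se:3approx} to force unanimous values across $F_i$, so that the gluing checks propagate correctly up the decomposition without creating congestion beyond $\cO(\log^2 n)$ bits per vertex.
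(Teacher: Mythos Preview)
Your proposal does not address the statement at all. Proposition~\ref{pr:regX} is the classical Courcelle/Borie--Parker--Tovey result asserting that every MSO-expressible property $\cP(G,X)$ is \emph{regular} in the sense of Definition~\ref{de:regX}: for each $w$ there is a finite set $\cC$ of homomorphism classes and composition functions $\odot_f$ compatible with the $w$-terminal grammar. This is a purely logical/combinatorial fact about MSO and tree automata, and the paper does not prove it --- it is quoted from~\cite{BoPaTo92,Courcelle90} with no argument given. A proof would involve showing that MSO formulae on $w$-terminal graphs can be translated into tree automata over the grammar of composition operations (or, equivalently, a Feferman--Vaught style decomposition of MSO types under gluing), and that the resulting state set and transition functions are effectively computable from $\varphi$ and $w$.

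What you wrote is instead a proof sketch of Theorem~\ref{theo:main-formal}: you describe how to \emph{use} regularity (taking $\cC$ and the $\odot_f$ as given) to build a distributed certification protocol on top of the tree-decomposition certificates of Lemma~\ref{lem:3approx}. That is the content of Section~\ref{se:regular}, and your sketch is quite close to the paper's own construction there, but it has nothing to do with establishing Proposition~\ref{pr:regX}. You have confused the statement you were asked to prove with the theorem that \emph{applies} it.
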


E.g., for the property "$X$ is an independent set of $G$", we can choose as homomorphism class $h(G(V,W,E),X)$ formed by a boolean indicating whether $\cP(G,X)$ is true, and the intersection of $X$ with the set of terminals. 
 
We may assume w.l.o.g. that the homomorphism class  $c=h(G,X)$, for $G=(V,W,E)$ always encodes the intersection of $X$ with the set of terminals. This is not explicitly required by the definition of regular properties, but it can be done since it only costs $w$ bits to encode the subset of the terminals contained in $X$. Therefore we assume there is a function $term(c,W)$ that, given a homomorphism class $c$ and an ordered set of terminals $W$ returns the unique possible set $X \cap W$, over all pairs $(G=(V,W,E),X)$ mapped to $c$. Thanks to this function, when we glue two terminal recursive graphs with their corresponding vertex subsets, we will be able to check that the glueing is coherent. Moreover, we can perform optimisation tasks as follows.

Assume that we deal with weighted graphs, i.e., we have a function $weigth$ associating to every vertex an integer weight in the interval $[-MAXW,+MAXW]$. Given a $w$-terrminal recursive graph $G$ and a regular property $\cP(G,X)$, we aim to compute the maximum weight vertex subset $X$ satisfying $\cP(G,X)$. For this purpose, for any homomorphism class $c \in \cC$ of property $\cP$, let $$MaxWeight(G,c) = \max\{weight(X) \mid X \subseteq V(G) \hbox{ s.t. } h(G,X) = c\}.$$ 
For convenience, we set $MaxWeight(G,c)$ to $-\infty$ if no such set exists, and more generally we consider that the maximum value of an empty set is $-\infty$. Then we have:

\begin{lemma}\label{le:optim}
For any $w$-terminal recursive graph $G=(V,W,E)$ and any homomorphism class $c$ of property $\cP$,
\begin{enumerate}
\item If $G$ is a $w$-terminal base graph,
$$ MaxWeight(G,c)  = weight(term(c,W)).$$
\item If $G = f(G_1)$ for some composition operation $f$ of arity 1, then
$$ MaxWeight(G,c)  = \max_{c_1  \hbox{ s.t. } c = \odot_f(c_1)} MaxWeight(G_1,c_1) .$$
\item If $G = f(G_1, G_2)$ for some composition operation $f$ of arity 2, then
\begin{align*}
MaxWeight(G,c)  = \max_{c_1, c_2  \hbox{ s.t. } c = \odot_f(c_1, c_2)} & MaxWeight(G_1,c_1) + MaxWeight(G_1,c_1) - \\
	&weight(term(c_1,W_1) \cap (term(c_2,W_2)),
\end{align*}
where $W_j$ denotes the set of terminals of graph $G_j$, for $j \in \{1,2\}$.
\end{enumerate}
\end{lemma}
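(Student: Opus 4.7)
The plan is to prove the three cases separately, in each case unpacking the definition of $MaxWeight(G,c)$ and using the regularity properties from Definition~\ref{de:regX} to convert optimization over vertex subsets of $G$ into optimization over homomorphism classes of subgraphs.

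For case 1, a $w$-terminal base graph has $V=W$, so every vertex of $G$ is a terminal. For any vertex subset $X \subseteq V$ we then have $X = X \cap W$, and, by our convention that $term(c,W)$ encodes the intersection of $X$ with the set of terminals for any $(G,X)$ with $h(G,X)=c$, there is at most one candidate set $X$ realizing the class $c$, namely $X = term(c,W)$. Hence the maximum is either $weight(term(c,W))$ or $-\infty$ if no such set exists (in which case $term(c,W)$ is undefined and we adopt the $-\infty$ convention). For case 2, the composition $G=f(G_1)$ does not alter vertices or edges, so subsets of $V(G)$ are in bijection with subsets of $V(G_1)$ with the same weight. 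By the second clause of Definition~\ref{de:regX}, $h(G,X) = \odot_f(h(G_1,X))$. Partitioning the feasible subsets for $c$ according to the class $c_1 = h(G_1,X)$, we obtain directly the stated recurrence.

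Case 3 is the main case. Writing $G=f(G_1,G_2)$, a vertex subset $X \subseteq V(G)$ naturally decomposes into subsets $X_1 \subseteq V(G_1)$ and $X_2 \subseteq V(G_2)$ that are \emph{coherent with the glueing}, i.e., whenever terminals $t_1 \in W_1$ and $t_2 \in W_2$ are identified to the same terminal of $G$, we have $t_1 \in X_1 \iff t_2 \in X_2$. Conversely, any pair $(X_1,X_2)$ coherent with the glueing induces a unique $X \subseteq V(G)$. Using our encoding convention for classes, coherence is equivalent to the condition that $term(c_1,W_1)$ and $term(c_2,W_2)$ agree on the glued terminals, where $c_j = h(G_j,X_j)$. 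Under these conditions, the second clause of Definition~\ref{de:regX} gives $h(G,X) = \odot_f(c_1,c_2)$. The weight arithmetic then comes from basic inclusion--exclusion on the disjoint union minus the identifications: since glued terminals contribute to both $weight(X_1)$ and $weight(X_2)$ but appear only once in $G$, we get
\[
weight(X) = weight(X_1) + weight(X_2) - weight(term(c_1,W_1)\cap term(c_2,W_2)),
\]
where the intersection is interpreted via the identifications performed by the matrix $m(f)$. Taking the maximum over all coherent pairs $(X_1,X_2)$ and regrouping by classes $(c_1,c_2)$ compatible with $c = \odot_f(c_1,c_2)$ yields the recurrence. The infeasibility cases are absorbed by the $-\infty$ convention.

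The main obstacle is the careful bookkeeping in case 3: ensuring that the bijection between subsets of $V(G)$ and coherent pairs of subsets of $V(G_1), V(G_2)$ preserves both the homomorphism class (via $\odot_f$) and the weight (via the inclusion--exclusion term), and that the set of coherent pairs is precisely captured by the condition $c=\odot_f(c_1,c_2)$ together with the agreement of $term(c_1,W_1)$ and $term(c_2,W_2)$ on glued terminals. Once this correspondence is verified, the maximization decomposes cleanly into an outer maximum over compatible $(c_1,c_2)$ and inner maxima that are exactly $MaxWeight(G_1,c_1)$ and $MaxWeight(G_2,c_2)$.
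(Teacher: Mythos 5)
Your proposal is correct and follows essentially the same route as the paper: both proofs reduce item~3 to the correspondence between subsets of $V(G)$ and coherent pairs $(X_1,X_2)$, use the same inclusion--exclusion identity for the weight of glued sets, and reduce the optimization to per-class maxima. The only cosmetic difference is that you justify the factorization of the maximum directly by observing that coherence and the correction term depend only on the classes $(c_1,c_2)$, whereas the paper establishes the same fact via two inequalities, using an exchange argument (a heavier $X_1'$ in the same class could only differ from $X_1$ on non-terminals, contradicting maximality of $X$) for the converse direction.
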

\begin{proof}
The first two items are simple consequences of the definitions, let us focus on the third item. 

Firstly, let us prove that $MaxWeight(G,c)$ is at least equal to the right-hand side of the expression. Let $c_1$, $c_2$ be the homomorphism classes realising the maximum, and for each $j \in \{1,2\}$ let $X_j$ be the vertex subset of $G_j$ such that $h(G_j,X_j) = c_j$ and $MaxWeight(G_j,c_j) = weight(X_j)$. Observe that, by taking the vertex subset $X$ of $G$ obtained from $X_1$ and $X_2$ by glueing the corresponding terminal vertices according to composition operation $f$, $weight(X) = weight(X_1) + weight(X_2) - weight(term(c_1,W_1) \cap (term(c_2,W_2))$, since the negative term avoids overcounting the vertices of $X$ appearing as terminals in both $X_1$ and $X_2$. By construction $(G,X) = f((G_1,X_1),(G_2,X_2)$ so $MaxWeight(G,c)$ is at least $weight(X)$.

Conversely, let $X$ be a maximum weight vertex subset of $G$ such that $h(G,X) = c$. For $j\in \{1, 2\}$, let $X_j$ be the intersection of $X$ with the vertex set of $G_j$, and $c_j = h(G_j,X_j)$. By construction, $c = \odot_f(c_1,c_2)$ and $weight(X) = weight(X_1) + weight(X_2) - weight(term(c_1,W_1) \cap (term(c_2,W_2))$. We claim that $weight(X_j) = MaxWeight(G_j,c_j)$, for both values $j$. Assume by contradiction there is a set, say $X'_1$, of larger weight than $X_1$ and such that $h(G_1,X_1) = c_1$. Note that $X'_1$ and $X_1$ may only differ on non-terminal vertices of $G_1$, otherwise they would not correspond to the same homomorphism class. Then set $X'$ obtained by glueing $X'_1$ and $X_2$ is of larger weight than $X$, moreover $h(G,X') = c$, contradicting the maximality of $X$. We conclude that the right-hand side of the expression is at least equal to   $MaxWeight(G,c)$ and the conclusion follows.

\end{proof}

\section{Certifying optimal sets for regular properties}\label{app:plsopt}

We can now extend out certification protocol to optimisation problems on weighted graphs, with polynomial weights.
\begin{theorem}\label{theo:main-formal-opt}
For every $k\geq 1$ and any regular graph property $\cP(G,X)$, there exists a distributed certification protocol certifying that $\tw(G) \leq k$ and $X$ is the maximum weight vertex set such that $\cP(G,X)$ is true, on graphs with polynomial weights, using certificates on $O(\log^2n)$ bits in $n$-node networks. 
\end{theorem}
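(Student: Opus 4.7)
My strategy is to enrich the protocol of Theorem~\ref{theo:main-formal} with the weighted dynamic programming of Lemma~\ref{le:optim}. The input is now a labelled graph in which every vertex~$v$ carries its weight (polynomially bounded, hence encodable on $O(\log n)$ bits) together with a bit indicating whether $v\in X$; the prover must convince the verifier both that $\tw(G)\leq k$ and that $X$ is of maximum weight among all subsets $X'$ with $\cP(G,X')$.

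The first step is to invoke (the straightforward extension to properties $\cP(G,X)$ of) Theorem~\ref{theo:main-formal} applied to the regular property $(\tw(G)\leq k)\wedge\cP(G,X)$, which is regular by Proposition~\ref{pr:regX} together with the closure of regularity under conjunction. This already produces a coherent tree decomposition of width $3k+2$ and depth $O(\log n)$ and, at every node~$i$, the certified homomorphism class $c_i^\star:=h(G_i,X\cap V_i)$; in particular, every vertex $v\in F_i$ knows the classes of $G_i$ and of $G_i[v]$ for~$\cP$. The second step consists in adding to the certificate of each $v\in F_i$ two additional pieces of information, each on $O(\log n)$ bits: the vector $(MaxWeight(G_i,c))_{c\in\cC}$ indexed by the (constantly many) classes of~$\cP$, and the scalar $w_i=weight(X\cap V_i)$. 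Consistency of these values inside $F_i$ is propagated along the auxiliary tree $S(i)$ exactly as in Section~\ref{se:regular}, which adds $O(\log n)$ bits to each auxiliary message; combined with the $O(\log n)$ auxiliary memberships of each vertex, the total overhead is $O(\log^2 n)$, absorbed in the certificate size of the base protocol.

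Each $v\in F_i$ then performs the following local checks. Using Lemma~\ref{le:optim} together with the set $B_i$, the set $B_j$ and the vector $(MaxWeight(G_j,c))_{c\in\cC}$ received from each exit vertex $\ell_j$, $j\in Children(v)$, vertex~$v$ reconstructs $(MaxWeight(G_i[v],c))_{c\in\cC}$ and, by glueing on~$B_i$ across $F_i$, the vector $(MaxWeight(G_i,c))_{c\in\cC}$, comparing the outcome to the values certified by the prover. In parallel, $v$ verifies the additive recurrence for $w_i$ from the scalars $w_j$, $j\in Children(v)$, and from $X\cap B_i$ (which every vertex of $F_i$ can learn through $S(i)$). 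Finally, every vertex of the root bag $F_r$ checks that $c_r^\star$ is an accepting class and that $w_r=MaxWeight(G_r,c_r^\star)=\max\{MaxWeight(G_r,c)\mid c\text{ accepting}\}$.

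Completeness is immediate from Lemma~\ref{le:optim}. Soundness follows bottom-up along the decomposition tree, exactly as in the proof of Theorem~\ref{theo:main-formal}: once the tree decomposition, the bags, and the classes $h(G_i,X\cap V_i)$ have been correctly certified up to node~$i$, the recurrence of Lemma~\ref{le:optim} forces each claimed $MaxWeight(G_i,c)$ to be correct whenever the local test at $v\in F_i$ accepts, and analogously for the scalars $w_i=weight(X\cap V_i)$; the root-level test then guarantees $weight(X)=\max_{c\text{ accepting}}MaxWeight(G,c)$, i.e.~that $X$ is indeed of maximum weight. The main technical point, already implicit in the proof of Theorem~\ref{theo:main-formal}, is the careful bookkeeping of the glueing operations: vertex $v$ must identify, from the bags received from the exit vertices $\ell_j$, which terminals of $B_j$ are glued to which vertices of $B_i$, and aggregate the weights and $MaxWeight$ values without double-counting the correction term $weight(term(c_1,W_1)\cap term(c_2,W_2))$ appearing in Lemma~\ref{le:optim}; this is however a direct extension of the information already available to~$v$ for performing the homomorphism-class check of Section~\ref{se:regular}.
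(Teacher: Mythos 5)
Your proposal is correct and follows essentially the same route as the paper's proof: certify the tree decomposition and the homomorphism classes of $\cP(G,X)$ as in Theorem~\ref{theo:main-formal}, add to each $v\in F_i$ the per-class quantities $MaxWeight(G_i,\cdot)$ and $MaxWeight(G_i[v],\cdot)$, verify them bottom-up via the glueing recurrence of Lemma~\ref{le:optim}, and compare $weight(X)$ against the best accepting class at the root. The one place where you diverge is the aggregation of $weight(X)$ itself: the paper sidesteps the decomposition tree entirely and uses the standard spanning-tree accumulator of~\cite{KormanKP10} over the whole graph, rooted at a vertex $v_r$ of the root bag, whereas you propagate $w_i=weight(X\cap V_i)$ along the decomposition tree with inclusion--exclusion corrections $weight(X\cap B_j\cap B_i)$ at each bag. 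Your variant works (the sets $V_j\setminus B_i$ are pairwise disjoint and $V_j\cap B_i=B_j\cap B_i$, and the partial sums over $Children(v)$ can be shared across $F_i$ via $S(i)$ just like the $MaxWeight(i,c;v)$ values), and it is more uniform with the rest of the protocol; the paper's choice is simply the more standard, bookkeeping-free device. Two small points you leave implicit that the paper states explicitly: each vertex must check that its own input bit ``$v\in X$'' agrees with the claimed set $X_{d(v)}(v)$, so that the prover cannot substitute a different~$X$; and the glueing of the $MaxWeight(G_i[w],\cdot)$ vectors over all $w\in F_i$ requires that these constantly many $O(\log n)$-bit values be disseminated along $S(i)$, which your certificate accounting does cover.
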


If instead of polynomial weights we use weights in the interval $[-MAXW,+MAXW]$, the protocol requires $O(\log n(\log n + \log MAXW))$ bits. 

We only describe the differences of the new protocol with respect to the protocol of Section~\ref{se:regular}. As for the protocol of Theorem~\ref{theo:main-formal}, we assume that the condition $\tw(G) \leq k$ is integrated to property $\cP$. Here the input is also formed of vertex set $X$. On one hand we certify $\cP(G,X)$ (this part of the protocol being almost identical to the one of Theorem~\ref{theo:main-formal}), and in the meantime we certify, for each homomorphism class $c$ and at each node $i$ of the decomposition, the weight of an optimal partial solution $(G_i,Y)$ for graph $G_i$, of homomorphism class $c$. Then we simply compare at the root node the weight of $X$ with the weight of an optimal solution.


Let us detail how we deal with set $X$. 

The first issue is that, for each node $i$ of the decomposition tree and each vertex $v \in F_i$, vertex $v$ must know the set $B_i \cap X$. For this purpose, 
The prover adds to the main messages of vertex $v$, a sequence of sets $\mathcal{X}(v) = (X_d(v), \dots X_1(v))$ where $X_j(v)$ represents the intersection of the solution $X$ with bag $B_j(v)$. 

The verification is very similar to the one of the edge sets of $G[B_i]$. In the verification round, $v$ verifies for each $j_1, j_2 \in \{1, \dots, d\}$ and for each $u \in B_{j_1}(v) \cap B_{j_2}(v)$, that $u \in X_{j_1}(v) \iff u \in X_{j_2}(v)$. By {\bf Claim 2}, all vertices in $F(v)$ receive the same main messages, then all nodes in $F(v)$ agree in the part of the solution $X$ that intersect the bags in the nodes from $i = B_d(v)$ up to the root. By {\bf Claim 4} the vertices globally agree on the set $X$. Also, each vertex $v \in X$ verifies that it belongs to $X_d(v)$, ensuring that the set $X$ claimed by the prover is consistent with the input. 

A second issue to deal with is the overall weight of set $X$. Here we use a completely different but standard technique to collect the weight of $X$ in a vertex $v_r$ belonging to the root bag, using $O(\log n)$ supplementary bits per vertex, see~\cite{KormanKP10}. We encode a spanning tree of the whole graph rooted in $v_r$, by giving to each vertex its distance to the root and the identifier to the parent vertex. Moreover, each vertex $v$ receives the total weight $weightX(v)$ of the vertices of $X$ contained in the subtree rooted at $v$. The situation is very similar to the tree certificates $TreeCert$ that we have used in Section~\ref{se:3approx} for the auxiliary messages, where we encoded a subtree $S(i)$ rooted in a given vertex $\ell_i$ and spanning the vertex subset $F_i$. The verification follows exactly the same principles for certifying the tree structure, moreover each vertex $v$ checks that  $weightX(v)$ corresponds to the sum of weights $weightX(w)$ for its children $w$, plus the weight of $v$ if the latter belongs to $X$.

A third issue is that, for each node $i$ of the decomposition tree, the prover sends to each $v \in F_i$ and the exit vertex $\ell_i$ the homomorphism class $h(G_i,X \cap V_i)$ (instead of the class of $G_i$).  
Also, $v$ receives the homomorphism class $h(G_i[v], X \cap V(G_i[v])$ and the weight of $X \cap V(G_i[v])$. This part is a simple update of the the protocol of 
Theorem~\ref{theo:main-formal}, adapted to properties on graphs and edge subsets.

The main novelty is that each $v \in F_i$ and $\ell_i$ receive from the prover, for each homomorphism class of property $\cP$, value $MaxWeight(i,c)$ corresponding to the maximum weight of a partial solution $(G_i, Y)$ of homomorphism class $c$ on graph $G_i$, and $v$ also receives value $MaxWeight(i,c; v)$, the maximum weight of $Y \subseteq V(G_i[v])$ such that $h(G_i[v],Y) = c$.

Let describe the verification performed by each vertex. We already ensured that vertices of a same set $F_i$, for each node $i$ of the decomposition tree, posses the correct set $X \cap B_i$. Checking property $\cP(G,X)$ is has no significant difference compared to Theorem~\ref{theo:main-formal}, we simply use the homomorphism functions of Definition~\ref{de:regX} instead of Definition~\ref{de:reg}. The construction is similar, we simply use the fact that each vertex $v \in F_i$ knows $X \cap B_i$, allowing it to compute the homomorphism class for base graphs $G^b_i$.

A supplementary effort is required to compute the weight of an optimal solution, then to compare it to the weight of $X$. For this purpose, at each node $i$ of the decomposition tree, the verifier performs the following operations on each $v \in F_i$.
\begin{itemize} 
\item Firstly, vertex $v$ checks that values $MaxWeight(i,c; v)$ received from the prover for each homomorphism class $c \in C$, claimed to be equal to $MaxWeight(G_i[v],c)$ are indeed consistent with the information it receives from nodes $j \in Children[v]$, the graph $G[B_i]$ and $X \cap B_i$.

For this purpose, $v$ computes $MaxWeight(G^b,c)$, for each homomorphism class $c$, using Lemma~\ref{le:optim} applied to the $3k+3$-terminal base graph $G^b$ and set $X \cap B_i$. Recall that $v$ has $G^b = (B_i, B_i, E(G[B_i]))$ and $X \cap B_i$ in its own message. Then, for each $j \in Children(v)$, it retrieves all values $MaxWeight(G_j,c)$ from the exit vertex $\ell_i$. Using again Lemma~\ref{le:optim} for graph $G^+_j = f(G_j,G^b_i)$, it computes all values $MaxWeight(G^+_j,c)$, from $MaxWeight(G_j,c_1)$, $MaxWeight(G_j,c_2)$ and $weight(term(c_1,B_j)) \cap term(c_2,B_i))$, over all classes $c_1, c_2$ with $c = \odot_f(c_1,c_2)$.

Then $v$ must deduce $MaxWeight(G_i[v],c)$ based on the fact that $G_i[v]$ is obtained by consecutively glueing $G^+_{j_1}$, $G^+_{j_2},\dots,$ $G^+_{j_p}$, where $Children(v) = \{j_1,\dots, j_p\}$ (e.g., we can order the nodes $j$ of $Children(v)$ by increasing size of the identifier of $\ell_j$). The glueing (composition operation) $f$ is the same at each step, performed on the same set of terminals $B_i$. Let $H^r$ denote the result of the glueing of $G^+_{j_1} \dots,G^+_{j_r}$, for each $r, 1 \leq r \leq p$. In particular $H^1 = G^+{j_1}$, $H^r = G_i[v]$ and $H^r = f(H^{r-1},G_{j_r})$ for each $r, 2 \leq r \leq p$. Therefore, for each $r$ from $2$ to $p$, vertex $v$ computes all values $MaxWeight(H^r,c)$ from values $MaxWeight(H^{r-1},c_1)$ and $MaxWeight(G_{j_r},c_2)$ using the equation of Lemma~\ref{le:optim} on operation $f$. Eventually $v$ has all values $MaxWeight(G_i[v],c)$ for all homomorphism classes $c$. If one of these values differs from the message $MaxWeight(i,c; v)$ received from the prover, then $v$ rejects.

\item Secondly, vertex $v$ checks that values $MaxWeight(i,c)$ correspond, for each homomorphism class $c$, to the value $MaxWeight(G_i,c)$ obtained by expressing $G_i$ as the consecutive glueing of all $G_i[w]$, for all $w \in F_i$, on the set of terminals $B_i$. Value $MaxWeight(G_i,c)$ is obtained by iteratively performing the $|F_i|-1$ glueings of $G_i[w]$ and using Lemma~\ref{le:optim} and values $Max(i,c;w)$. As above, at iteration $r$, $2 \leq r \leq |F_i|$, we glue the first $r$ graphs of the form $G[w]$, where vertices $w$ are ordered by increasing identifiers. Again, in case of inconsistency between $MaxWeight(i,c)$ and $MaxWeight(G_i,c)$ for some homomorphism class $c$, vertex $v$ rejects.

\item  At the root node $r$, recall that we must have a vertex $v_r \in F_r$ that knows the weight of $X$ -- it corresponds simply to $weightX(v_r)$. The node $v_r$ firstly checks that it belongs indeed to the root of the decomposition tree by testing its depth, i.e., checking that $d(v_r)=1$. Then $v_r$ computes the maximum weight $MaxWeight(G_r,c)$ as the maximum of $MaxWeight(r,c)$ over all accepting classes $c$. If one of those is larger than the $weightX(v_r)$, vertex $v_r$ rejects.
\end{itemize}

\paragraph{Soundness and completeness.} We already know that the protocol correctly encodes the tree decomposition, the homomorphism classes of $\cP(G,X)$ on partial graphs $G_i$ and $G_i(v)$, and that the weight of set $X$ is encoded in $weightX(v_r)$ for some vertex $v_r$ belonging to the root bag. It remains to deal with quantities  $MaxWeight(i,c)$ and $MaxWeight(i,c;v)$.

For the completeness part, the prover simply needs to correctly compute the intersection of $X$ with the bags, and values $MaxWeight(i,c)$ and $MaxWeight(i,c;v)$ for each homomorphism class $c$, each node $i$ of the decomposition tree and each vertex $v \in F_i$. The proof that vertex $v$ accepts when certifying messages $MaxWeight(i,c)$ and $MaxWeight(i,c;v)$ assigned to it follows the same steps as the completeness part for the decision problem, certifying that homomorphism classes of $G_i$ and $G_i[v]$ are correct. We need to use Lemma~\ref{le:optim}, allowing to obtain the optimal weight of a homomorphism class after glueing, instead of simply using Definition~\ref{de:reg}. Therefore, we prove by bottom-up induction on nodes $i$ that all vertices $v \in F_i$ accept, if $i$ is not the root. When $i$ is the root $r$, vertex $v_r \in F_r$ also check that the homomorphism class of $\cP(G_r,X)$ is an accepting one, and that the weight of $X$ corresponds to the maximum weight of an accepting class, and both conditions hold for a yes-instance. 

For the soundness condition, assume that all vertices accept. We prove as before, by bottom-up induction (from leaves to the root) on nodes $i$, that homomorphism classes as well as quantities $MaxWeight(i,c)$ and $MaxWeight(i,c;v)$ are correct, in the sense that they correspond to graphs $G_i$ and $G_i[v]$. Again, for values $MaxWeight(i,c)$ and $MaxWeight(i,c;v)$, the glueing is performed using Lemma~\ref{le:optim}.

At the root, since vertex $v_r \in F_r$ accepts, it means that $\cP(G_r,X)$ is true and moreover the weight of $X$ (which is equal to $weightX(v_r)$) is the maximum possible weight over all vertex subsets $Y$ such that $\cP(G,Y)$ accepts (by the last item of the verification protocol). Therefore $X$ is the optimal set for property $\cP$.

\end{document}